\titleformat*{\section}{\Large\scshape}
\titleformat*{\subsection}{\large\scshape}
\titleformat*{\subsubsection}{\normalsize\itshape}
\theoremstyle{plain} \numberwithin{equation}{section}
\newtheorem{theorem}{Theorem}[section]
\newtheorem{conjecture}[theorem]{Conjecture}
\newtheorem{lemma}[theorem]{Lemma}
\definecolor{linkblue}{HTML}{1EB5EB}
\definecolor{linkgreen}{HTML}{06BA63}
\definecolor{linkred}{HTML}{E3170A}
\begin{document} 
\thispagestyle{empty}
\setstretch{1.15}

\begin{center}
    \textsc{\Large{Long-Wavelength Limit of the Two-Fluid Euler-Poisson System}}
\\[20pt]
    {\large{Emily K. Kelting\textsuperscript{\small{\twemoji{cloud with lightning}}} and J. Douglas Wright\textsuperscript{\footnotesize{\twemoji{dragon}}}}}
    \\[10pt]
\end{center}
\begin{minipage}[c]{0.45\textwidth}
\centering\textsuperscript{\small{\twemoji{cloud with lightning}}}\textit{Department of Mathematics \\ University of New England \\ Biddeford, ME 04005}
\end{minipage}
\hfill
\begin{minipage}[c]{0.45\textwidth}
\centering\textsuperscript{\footnotesize{\twemoji{dragon}}}\textit{Department of Mathematics \\ Drexel University \\ Philadelphia, PA 19104}
\end{minipage}
\\[10pt]

\hrule
\begin{abstract}
    Plasma is a medium filled with free electrons and positive ions. Each particle acts as a conducting fluid with a single velocity and temperature when electromagnetic fields are present. This distinction between the roles played by electrons and ions is what we refer to as the two-fluid description of plasma. In this paper, we investigate the dynamics of these particles in both hot and cold plasma using a collisionless ``Euler-Poisson'' system. Employing analytical and computational techniques from differential equations, we show this system is governed by the dynamics of the Korteweg–de Vries (KdV) equation in the long-wavelength limit.
\end{abstract}

{\small{\textbf{Keywords:} Plasma, Euler-Poisson, Ion-Acoustic Waves, Korteweg-de Vries}}
\\[6pt]
\hrule

\pagestyle{fancy}
\section{Introduction}

Plasma is a complex medium consisting of freely flowing charged elementary particles -- cations and electrons -- in a neutral background \cite{waymouth1991}. With electric fields present, each group behaves like a conducting fluid with its own temperature. This distinction between the roles played by ions and electrons is referred to as the two-fluid description of plasma \cite{krishan,reitz1960}. 

In our model, we ignore gravitational and magnetic fields, and let the particles be affected only by the electrical field as they move across space. We will assume, moreover, that the electric field causes the particles to diverge --- a collisionless system. This simplified representation of plasma, incorporating the Euler equations to describe fluid motion and the Poisson equation to characterize electric potential, is aptly named the collisionless ``Euler-Poisson'' (EP) system \cite{krishan,reitz1960}.

\subsection{Two-fluid Euler-Poisson System}\label{EP_2}

Letting \(n_{\pm}(x,t)\) be the particle number density at spatial point \(x\in\mathbb{R}\) and time \(t\geq 0\), \(v_{\pm}(x,t)\) be the velocity, and \(\phi(x,t)\) be the electric potential, with ``+'' representing the cations and ``--'' the electrons, the EP system is
\begin{equation}
    \label{eq::5hotrescale}
    \begin{aligned}
        \partial_t n_+ + \partial_x(n_+ v_+) = 0, 
         \\[4pt]
        \partial_t n_- + \partial_x(n_- v_-) = 0,  \\[4pt]
        \partial_t v_+ + \tfrac{1}{2}\partial_x v_+^2 + \tau_i \partial_x \ln(n_+) + \partial_x\phi = 0,
        \\[4pt]
        m_e \left(\partial_t v_- + \tfrac{1}{2}\partial_x v_-^2\right) + \partial_x \ln(n_-) - \partial_x\phi = 0,
         \\[4pt]
        \partial_{xx}\phi - n_- + n_+ = 0.
    \end{aligned}
\end{equation}
\noindent Here, the parameter \(m_e\) denotes the mass of the electron, and \(\tau_i\) is the temperature of the ion. It is important to note that the model is scaled so that the ion mass is at unity. Hence, the corresponding electron mass is \(m_e \approx 0.00054551\). Furthermore, by scaling the electron temperature to unity, we can identify the plasma as ``hot'' when \(\tau_i=1\)  (thermodynamic equilibrium) and ``cold'' when \(\tau_i=0\) \cite{krishan}.

\subsubsection{Galilean Ivariance \& Reversibility}\label{EP_2_Gali}
A noteworthy observation about the equations in \eqref{eq::5hotrescale} is that they are reversible and Galilean invariant. That is, if \((n_{\pm}, v_{\pm}, \phi)(x, t)\) is a solution, then so are \((n_{\pm}, v_{\pm}, \phi)(-x, -t)\) and \((n_{\pm}, v_{\pm}+c, \phi)(x-ct, t)\), respectively. The reversibility of the equations holds a key place in our research as it captures the symmetrical nature of the physical laws and processes of plasma. It means \eqref{eq::5hotrescale} can accurately predict the EP system's behavior irrespective of the direction of time. Moreover, Galilean invariance is a crucial principle, asserting that the motion of the equations of \eqref{eq::5hotrescale} are consistent and applicable across various reference frames.

\subsection{Motivation}

A typical model-simplifying assumption is to set \(m_e = 0\) in \eqref{eq::5hotrescale} as the mass of an electron is many orders of magnitude smaller than a positive ion. With this estimation, the electron distribution behaves according to the Boltzmann Law, \(n_- = n_0\text{e}^{\phi}\) where \(n_0\) is the equilibrium particle density. Implementing these estimations, the electron terms disappear and the plasma is classified as a singular fluid. 

Experimental and theoretical studies have shown that the long-wavelength solutions of both hot and cold states of the single-fluid Euler-Poisson system converge globally in time to the solutions of the Korteweg-de Vries equation. However, in the twentieth century, only formal derivations of the long-wavelength limit were known \cite{washimi1966, su1969}. Guo and Pu justified this rigorously in 2014 using \textit{a priori} estimates and energy arguments \cite{guo}. 
Nonetheless, since electrons have mass, the question is: do these findings still hold if we take \(m_e \neq 0\)?
\section{The Korteweg-de Vries Approximation}\label{kdv}

In plasma dynamics, the concept of the long-wavelength limit is crucial as it allows us to understand the behavior of plasma when perturbed by low-frequency waves. The term ``long-wavelength limit'' describes a situation in which the wave's wavelength is significantly greater than the scale of the system in a spatially uniform electric field. 
In such circumstances, numerous model simplifications can be applied, leading to a more tractable analysis.




\subsection{Transformation \& Formal Expansion}\label{kdv_2}

Just as in \cite{guo}, we begin by applying the Gardner-Morikawa Transformation to \eqref{eq::5hotrescale}, 
\begin{equation*} 
Y = \varepsilon^{\sfrac{1}{2}}(x-ct), \qquad T = \varepsilon^{\sfrac{3}{2}}t,
\end{equation*}
and acquiring the parameterized system
\begin{equation}\label{eq::5param}
    \begin{split}
        \varepsilon\partial_{T}n_{+} - c\partial_{Y}n_{+} + \partial_{Y}(n_{+}v_{+}) & = 0, \\[3pt]
        \varepsilon\partial_{T}n_{-} - c\partial_{Y}n_{-} + \partial_{Y}(n_{-}v_{-}) & = 0, \\[3pt]
        \varepsilon\partial_{T}v_{+} - c\partial_{Y}v_{+} + v_{+}\partial_{Y}v_{+} + \tau_i \partial_{Y}\ln(n_{+}) + \partial_{Y}\phi & = 0, \\[3pt]
        m_e(\varepsilon\partial_{T}v_{-} - c\partial_{Y}v_{-} + v_{-}\partial_{Y}v_{-}) + \partial_{Y}\ln(n_{-}) - \partial_{Y}\phi & = 0, \\[3pt]
        \varepsilon \partial_{Y}^{2}\phi + n_{+}-n_{-} & = 0 
    \end{split}
\end{equation}%
\cite{su1969}. Here, \(\varepsilon\) is the amplitude of the initial disturbance, with \(0 < \varepsilon \ll 1\), and \(c\) is a velocity parameter to be determined later. Consider the following formal expansion centered around the equilibrium solution \((n_{\pm},v_{\pm},\phi)=(1,0,0)\),
\begin{subequations}\label{eq::form_exp}
\begin{align}
    n_{\pm} & = 1 + \varepsilon^1 n_{\pm}^{(1)} + \varepsilon^2 n_{\pm}^{(2)} + \varepsilon^3 n_{\pm}^{(3)} + \cdots, \\[3pt]
    v_{\pm} & = \varepsilon^1 v_{\pm}^{(1)} + \varepsilon^2 v_{\pm}^{(2)} + \varepsilon^3 v_{\pm}^{(3)} + \cdots, \\[3pt]
    \phi & = \varepsilon^1 \phi^{(1)} + \varepsilon^2 \phi^{(2)} + \varepsilon^3 \phi^{(3)} + \cdots .
\end{align}
\end{subequations}
Using this replacement for the variables in \eqref{eq::5param}, we get a power series of \(\varepsilon\) whose coefficients depend on \((n_{\pm}^{(k)},v_{\pm}^{(k)},\phi^{(k)})\) for \(k\geq 0\). We now match the coefficients to determine the functions \(n_\pm^{(k)}\), \(v_\pm^{(k)}\), \(\phi^{(k)}\).


\subsubsection{The coefficients of \(\varepsilon^0\)}
The coefficients of \(\varepsilon^0\) are automatically 0.


\subsubsection{The coefficients of \(\varepsilon^1\)} 
Setting the coefficients of \(\varepsilon^1\) to be 0, we get the system
\begin{subequations}\label{eq::coeff_e1}
\begin{empheq}[left = (\mathscr{E}_0) \empheqlbrace\,]{align}
        \partial_{Y}v_{+}^{(1)} - c\partial_{Y}n_{+}^{(1)} & = 0, \\[5pt]
        \partial_{Y}v_{-}^{(1)} - c\partial_{Y}n_{-}^{(1)} & = 0, \\[5pt]
        \tau_i \partial_{Y}n_{+}^{(1)} -c\partial_{Y}v_{+}^{(1)} + \partial_{Y}\phi^{(1)} & = 0, \\[5pt]
        \partial_{Y}n_{-}^{(1)} - cm_e\partial_{Y}v_{-}^{(1)} - \partial_{Y}\phi^{(1)} & = 0, \\[5pt]
        n_{+}^{(1)} - n_{-}^{(1)} & = 0.
\end{empheq}
\end{subequations}

This set of equations written as a matrix is
\begin{equation*}
\begin{bmatrix}
    -c & 0 & 1 & 0 & 0 \\
    0 & -c & 0 & 1 & 0 \\
    \tau_i & 0 & -c & 0 & 1 \\
    0 & 1 & 0 & -cm_e & -1 \\
    1 & -1 & 0 & 0 & 0
\end{bmatrix}
\begin{bmatrix}
    \partial_{Y}n_{+}^{(1)} \\
    \partial_{Y}n_{-}^{(1)} \\
    \partial_{Y}v_{+}^{(1)} \\
    \partial_{Y}v_{-}^{(1)} \\
    \partial_{Y}\phi^{(1)} \\
\end{bmatrix}
\;=\;
0.
\end{equation*}
To get a nontrivial solution for \(n_{\pm}^{(1)}, \, v_{\pm}^{(1)}, \text{ and } \phi^{(1)}\), we need the determinant of the coefficient matrix to be zero, which implies 
\begin{equation}\label{eq::C2}
    c^2 = \frac{1+\tau_i}{1+m_e}.
\end{equation} 
We define the root of \eqref{eq::C2} as the `speed of sound', the maximum propagation speed of waves associated with linear dispersion. Because of the Galilean invariance, we are allowed to take the positive root for simplicity. This will be the value of \(c\) dedicated to future calculations. 

Better still, \eqref{eq::coeff_e1} gives the relation
\begin{empheq}[left = (\mathscr{S}_1) \empheqlbrace\,]{equation}
\label{eq::rely_on_n+1}
    \begin{split}
        n_{-}^{(1)} & = n_{+}^{(1)}, \\
        v_{+}^{(1)} & = cn_{+}^{(1)}, \\
        v_{-}^{(1)} & = cn_{+}^{(1)}, \\
        \phi^{(1)} & = (1 - \tau_i m_e) n_{+}^{(1)}. 
    \end{split}
\end{empheq}
Therefore, to determine \((n_{\pm}^{(1)},v_{\pm}^{(1)},\phi^{(1)})\), we need only establish \(n_{+}^{(1)}\).


\subsubsection{The coefficients of \(\varepsilon^2\) and the KdV equation for \(n_{+}^{(1)}\)}
Setting the coefficients of \(\varepsilon^2\) to be zero, we obtain the evolution system
\begin{subequations}\label{eq::coeff_e2}
\begin{empheq}[left = (\mathscr{E}_1) \empheqlbrace\,]{align}
    \label{eq::coeff_e2_a}
        \partial_{T}n_{+}^{(1)} + \partial_{Y}v_{+}^{(2)} + \partial_{Y}\left(n_{+}^{(1)}v_{+}^{(1)}\right) - c\partial_{Y}n_{+}^{(2)} = 0, \\[3pt]
    \label{eq::coeff_e2_b}
        \partial_{T}n_{-}^{(1)} + \partial_{Y}v_{-}^{(2)} + \partial_{Y}\left(n_{-}^{(1)}v_{-}^{(1)}\right) - c\partial_{Y}n_{-}^{(2)} = 0, \\[3pt]
    \label{eq::coeff_e2_c}
        \tau_i \partial_{Y}\left(n_{+}^{(2)} - \frac{1}{2}(n_{+}^{(1)})^2\right) + \partial_{T}v_{+}^{(1)} + v_{+}^{(1)}\partial_{Y}v_{+}^{(1)} - c\partial_{Y}v_{+}^{(2)} + \partial_{Y}\phi^{(2)} = 0, \\[3pt]
    \label{eq::coeff_e2_d}
        \partial_{Y}\left(n_{-}^{(2)} - \frac{1}{2}(n_{-}^{(1)})^2\right) + m_e\left(\partial_{T}v_{-}^{(1)} + v_{-}^{(1)}\partial_{Y}v_{-}^{(1)} - c\partial_{Y}v_{+}^{(2)}\right) - \partial_{Y}\phi^{(2)} = 0, \\[3pt]
    \label{eq::coeff_e2_e}
        \partial_{Y}^{2}\phi^{(1)} + n_{+}^{(2)} - n_{-}^{(2)} = 0.
\end{empheq}
\end{subequations}
%
%
Differentiating \eqref{eq::coeff_e2_e} with respect to \(Y\) and multiplying it by \(c^2(1-\tau_i m_e)\), 
as well as multiplying \eqref{eq::coeff_e2_a} by \(c(1+\tau_i)\) and \eqref{eq::coeff_e2_b} by \(c m_e(1+\tau_i)\), and then adding all of these together with \((1+\tau_i)\)\eqref{eq::coeff_e2_c} and \((1+\tau_i)\)\eqref{eq::coeff_e2_d}, we find \(n_{+}^{(1)}\) satisfies the classic Korteweg-de Vries equation,
\begin{equation}\label{eq::KdV_n+1}
    \partial_{T}n_{+}^{(1)} + c n_{+}^{(1)}\partial_{Y}n_{+}^{(1)} + \frac{(c^2 m_e -1)^2}{2c(1+m_e)} \partial_{Y}^{3}n_{+}^{(1)} = 0.
\end{equation}
Utilization of relations \eqref{eq::C2} and \eqref{eq::rely_on_n+1} have caused all of the coefficients of \(n_{\pm}^{(2)}, \, v_{\pm}^{(2)}, \text{ and } \phi^{(2)}\) in \eqref{eq::KdV_n+1} to vanish. 

Observe that systems \eqref{eq::rely_on_n+1} and \eqref{eq::KdV_n+1} are self-contained and do not depend on \((n_{\pm}^{(m)},v_{\pm}^{(m)},\phi^{(m)})\) for \(m\geq 2\). Thus, taking \((n_{\pm}^{(1)}, v_{\pm}^{(1)}, \phi^{(1)})\) as known solutions solved by \eqref{eq::rely_on_n+1} and \eqref{eq::KdV_n+1}, we can find the equations satisfied by \((n_{\pm}^{(2)}, v_{\pm}^{(2)}, \phi^{(2)})\) from \eqref{eq::coeff_e2},
\begin{empheq}[left = (\mathscr{S}_2) \empheqlbrace\,]{equation}
\label{eq::rely_on_n+2}
    \begin{array}{*2{>{\displaystyle}l}}
        n_{-}^{(2)} & = \displaystyle{n_{+}^{(2)} + (1 - \tau_i m_e) \partial_{Y}^2 n_{+}^{(1)}},\\[10pt]
        v_{+}^{(2)} & = cn_{+}^{(2)} + \frac{(c^2 m_e-1)^2}{2c(1+m_e)}\partial_{Y}^2 n_{+}^{(1)} - \frac{c}{2}(n_{+}^{(1)})^2 , \\[10pt]
        v_{-}^{(2)} & = cn_{+}^{(2)} +\frac{c(1-\tau_i m_e)}{1+m_e}\partial_{Y}^2 n_{+}^{(1)} - \frac{c}{2}(n_{+}^{(1)})^2,\\[10pt]
        \phi^{(2)} & = (1 - \tau_i m_e)n_{+}^{(2)} +\frac{(c^2 m_e-1)^2}{1+m_e}\partial_{Y}^2 n_{+}^{(1)} - \frac{1 - \tau_i m_e}{2}(n_{+}^{(1)})^2.
    \end{array}
\end{empheq}
Notice that \eqref{eq::rely_on_n+1} and \eqref{eq::rely_on_n+2} have the same structure, with the latter containing lower-order correcting terms dependent only on \(n_+^{(1)}\). Again, to determine \((n_{\pm}^{(2)},v_{\pm}^{(2)},\phi^{(2)})\), we need only find \(n_{+}^{(2)}\) and \((n_{\pm}^{(1)},v_{\pm}^{(1)},\phi^{(1)})\).


\subsubsection{The coefficients of \(\varepsilon^3\) and the linearized KdV equation for \(n_+^{(2)}\)}
\noindent Setting the coefficients of \(\varepsilon^3\) to be zero returns the evolution system
\begin{subequations}\label{eq::coeff_e3}
\begin{empheq}[left = (\mathscr{E}_2) \empheqlbrace\,]{align}
    \label{eq::coeff_e3_a}
    \partial_{T} n_{+}^{(2)} + \partial_{Y} v_{+}^{(3)} + \partial_{Y}\left( n_{+}^{(1)}v_{+}^{(2)} + n_{+}^{(2)}v_{+}^{(1)}\right) - c \partial_{Y} n_{+}^{(3)} = 0, \\[3pt]
    \label{eq::coeff_e3_b}
    \partial_{T} n_{-}^{(2)} + \partial_{Y} v_{-}^{(3)} + \partial_{Y}\left( n_{-}^{(1)}v_{-}^{(2)} + n_{-}^{(2)}v_{-}^{(1)} \right) - c \partial_{Y} n_{-}^{(3)} = 0, \\[3pt]
    \nonumber
    \partial_{Y}\left(n_{+}^{(3)} - n_{+}^{(1)}n_{+}^{(2)} + \frac{1}{3}\left(n_{+}^{(1)}\right)^3 \right) \qquad \\
    \label{eq::coeff_e3_c}
    \partial_{T} v_{+}^{(2)} + \partial_{Y}\left( v_{+}^{(1)}v_{+}^{(2)} - c v_{+}^{(3)}\right) + \partial_{Y} \phi^{(3)} = 0, \\[3pt]
    \nonumber
    \partial_{Y}\left(n_{-}^{(3)} - n_{-}^{(1)}n_{-}^{(2)} + \frac{1}{3}\left(n_{-}^{(1)}\right)^3 \right) \qquad \\ 
    \label{eq::coeff_e3_d}
     + m_e\left(\partial_{T} v_{-}^{(2)} + \partial_{Y}\left(v_{-}^{(1)}v_{-}^{(2)} - c v_{-}^{(3)}\right)\right) - \partial_{Y} \phi^{(3)} = 0, \\[3pt]
    \label{eq::coeff_e3_e}
    \partial_{Y}^2 \phi^{(2)} + n_{+}^{(3)} - n_{-}^{(3)} = 0.
\end{empheq}
\end{subequations}
Differentiating \eqref{eq::coeff_e3_e} with respect to \(Y\) and multiplying it by \(c^2(1-\tau_i m_e)\), as well as multiplying \eqref{eq::coeff_e3_a} by \(c(1+\tau_i)\) and \eqref{eq::coeff_e3_b} by \(c m_e(1+\tau_i)\), and then adding all of these together with \((1+\tau_i)\)\eqref{eq::coeff_e3_c} and \((1+\tau_i)\)\eqref{eq::coeff_e3_d}, we find \(n_{+}^{(2)}\) satisfies the inhomogeneous linearized Korteweg-de Vries equation
\begin{equation}\label{eq::KdV_n+2}
    \partial_{T} n_{+}^{(2)} + c\partial_{Y}\left(n_{+}^{(1)}n_{+}^{(2)}\right) + \frac{(c^2 m_e -1)^2}{2c(1+m_e)^3}\partial_{Y}^3 n_{+}^{(2)} = F^{(1)}
\end{equation}
where \(F^{(1)}\) depends only on \(n_{+}^{(1)}\). Here we have used relations \eqref{eq::rely_on_n+2}, \eqref{eq::rely_on_n+1}, and \eqref{eq::C2} under which all of the coefficients of \(n_{\pm}^{(3)}, \, v_{\pm}^{(3)}, \text{ and } \phi^{(3)}\) in \eqref{eq::KdV_n+2} disappear. Once again, note that systems \eqref{eq::rely_on_n+2} and \eqref{eq::KdV_n+2} for \((n_{\pm}^{(2)},v_{\pm}^{(2)},\phi^{(2)})\) are self-contained and do not depend on \((n_{\pm}^{(m)},v_{\pm}^{(m)},\phi^{(m)})\) for \(m\geq 3\). As such, from \eqref{eq::coeff_e3}, we can express \((n_{-}^{(3)},v_{\pm}^{(3)},\phi^{(3)})\) in terms of \(n_+^{(3)}\) and \((n_{\pm}^{(k)},v_{\pm}^{(k)},\phi^{(k)})\), \(k=1,2\):
\begin{empheq}[left = (\mathscr{S}_3) \empheqlbrace\,]{equation}
\label{eq::rely_on_n+3}
    \begin{array}{*2{>{\displaystyle}l}}
        n_{-}^{(3)} & = n_{+}^{(3)} + f^{(2)}, \\[3pt]
        v_{+}^{(3)} & = cn_{+}^{(3)} + g^{(2)}, \\[3pt]
        v_{-}^{(3)} & = cn_{+}^{(3)} + h^{(2)}, \\[3pt]
        \phi^{(3)} & = (1 - \tau_i m_e) n_{+}^{(3)} + j^{(2)}, 
    \end{array}
\end{empheq}
where \(f^{(2)}, \, g^{(2)}, \, h^{(2)}, \, j^{(2)}\) are functions that depend only on \(n_{+}^{(1)}\) and \(n_{+}^{(2)}\). 


\subsubsection{The coefficients of \(\varepsilon^{k+1}\) and the linearized KdV equation for \(n_+^{(k)}\)} \label{sec::eps_k+1_coeff}
Setting the coefficients of \(\varepsilon^{k+1}\) to be zero, we get an evolution system for \((n_{\pm}^{(k)},v_{\pm}^{(k)},\phi^{(k)})\),
%
{\small{
\begin{subequations}\label{eq::coeff_ek}
\begin{empheq}[left = (\mathscr{E}_k) \empheqlbrace\,]{align}
    \label{eq::coeff_ek_a}
        \partial_{T}n_{+}^{(k)} + \partial_{Y}v_{+}^{(k+1)} + \partial_{Y}\left(\sum_{j=1}^{k} n_{+}^{(j)} v_{+}^{(k+1-j)}\right) - c\partial_{Y}n_{+}^{(k+1)} = 0, \\[3pt]
    \label{eq::coeff_ek_b}
        \partial_{T}n_{-}^{(k)} + \partial_{Y}v_{-}^{(k+1)} + \partial_{Y}\left(\sum_{j=1}^{k} n_{-}^{(j)} v_{-}^{(k+1-j)}\right) - c\partial_{Y}n_{-}^{(k+1)} = 0, \\[3pt]
    \label{eq::coeff_ek_c}
        \partial_{T} v_{+}^{(k)} - c\partial_{Y} v_{+}^{(k+1)} + \frac{1}{2}\partial_{Y}\left(\sum_{j=1}^{k} v_{+}^{(j)} v_{+}^{(k+1-j)}\right) + \tau_i \partial_Y \ell_{+}^{(k+1)} + \partial_{Y}\phi^{(k+1)} = 0, \\[3pt]
    \label{eq::coeff_ek_d}
        m_e \left(\partial_{T} v_{-}^{(k)} - c\partial_{Y} v_{-}^{(k+1)} + \frac{1}{2}\partial_{Y}\left(\sum_{j=1}^{k} v_{-}^{(j)} v_{-}^{(k+1-j)}\right) \right) + \partial_Y \ell_{-}^{(k+1)} - \partial_{Y}\phi^{(k+1)} = 0,\\[3pt]
    \label{eq::coeff_ek_e}
        \partial_{Y}^{2}\phi^{(k)} + n_{+}^{(k+1)} - n_{-}^{(k+1)} = 0,
\end{empheq}
\end{subequations}
}}
with 
\begin{equation}\label{eq::ell_def}
\begin{split}
    \ell^{(k+1)}_{\pm} = \sum_{j=1}^{k+1}\frac{(-1)^{j+1}}{j}\left(\sum_{\substack{a_1 + 2a_2 + \cdots + (k+1) a_{k+1} = k+1 \\ a_1 + a_2 + \cdots + a_{k+1} = j \\ a_1,a_2,\ldots,a_{k+1} \geq 0}} \binom{j}{a_1,a_2,\ldots,a_{k+1}} \prod_{i=1}^{k+1} \left(n_\pm^{(i)}\right)^{a_i}\right)
\end{split}
\end{equation}
%
%
%
where \[\binom{j}{a_1,a_2,\ldots,a_{k+1}} = \frac{j!}{a_1!a_2!\cdots a_{k+1}!}\] with \(a_i\in\mathbb{Z}\) for \(1\leq i \leq k+1\). The derivation of \(\ell_\pm^{(k+1)}\) utilizes a combination of Taylor's Theorem and Multinomial Theorem, an extension of the Binomial Theorem for powers greater than 2. For specifics, please see Appendix~\ref{app::ell_deriv}. From the evolution system, we obtain
\begin{empheq}[left = (\mathscr{S}_k) \empheqlbrace\,]{align}
\label{eq::5_rely_on_n+k}
    \begin{array}{*2{>{\displaystyle}l}}
        n_{-}^{(k+1)} & = n_{+}^{(k+1)} + f^{(k)} \\[3pt]
        v_{+}^{(k+1)} & = cn_{+}^{(k+1)} + g^{(k)},\\[3pt]
        v_{-}^{(k+1)} & = cn_{+}^{(k+1)} + h^{(k)},\\[3pt]
        \phi^{(k+1)} & = c v_{+}^{(k+1)} - \tau_i \ell_{+}^{(k+1)} + j^{(k)},
    \end{array}
\end{empheq}
for some functions \(f^{(k)}, \, g^{(k)}, \, h^{(k)}, \, j^{(k)}\) which only depend on \(n_{+}^{(m)}\) for \(1 \leq m \leq k\). The general forms for these functions are

\begin{empheq}[left = \empheqlbrace\,]{equation}
\label{eq::5_rely_on_n+k_spec}
    \begin{array}{*2{>{\displaystyle}l}}
        f^{(k)} & = \partial_{Y}^{2}\phi^{(k)} \\
        g^{(k)} & = - \int \partial_{T}n_{+}^{(k)} dY - \sum_{j=1}^{k} n_{+}^{(j)} v_{+}^{(k+1-j)},\\
        h^{(k)} & = - \int \partial_{T}n_{-}^{(k)} dY - \sum_{j=1}^{k} n_{-}^{(j)} v_{-}^{(k+1-j)},\\
        j^{(k)} & = -\int \partial_{T} v_{+}^{(k)} dY - \frac{1}{2}\sum_{j=1}^{k} v_{+}^{(j)} v_{+}^{(k+1-j)},\\
    \end{array}
\end{empheq}
The relations \eqref{eq::5_rely_on_n+k} and \eqref{eq::5_rely_on_n+k_spec} make the evolution system \eqref{eq::coeff_ek} solvable and show that we need only find \(n_{+}^{(k+1)}\) to get \((n_{\pm}^{(k+1)},v_{\pm}^{(k+1)},\phi^{(k+1)})\). 

By the same procedure that led to \eqref{eq::KdV_n+1} and \eqref{eq::KdV_n+2}, we arrive at the linearized inhomogeneous Korteweg-de Vries equation for \(n_+^{(k)}\),
\begin{equation}\label{eq::5_KdV_n+k}
    \partial_{T} n_{+}^{(k)} + c\partial_{Y}\left(n_{+}^{(1)}n_{+}^{(k)}\right) + \frac{(c^2 m_e -1)^2}{2c(1+m_e)}\partial_{Y}^3 n_{+}^{(k)} = F^{(k-1)}
\end{equation}
where \(F^{(k-1)}\) is a function that depends only on \(n_{+}^{(m)}\) for \(1 \leq m \leq k-1\), which are known from the previous \((k-1)\) steps. Once more, the system \eqref{eq::5_rely_on_n+k} and \eqref{eq::5_KdV_n+k} for \((n_{\pm}^{(k)},v_{\pm}^{(k)},\phi^{(k)})\) is self-contained and does not depend on \((n_{\pm}^{(j)},v_{\pm}^{(j)},\phi^{(j)})\) for \(j\geq k+1\).

\subsection{Solutions to KdV approximations}\label{kdv_2_sol}

Our previous calculations have resulted in multiple KdV approximations to the variables of our two-fluid plasma system \eqref{eq::5hotrescale}. All of the KdV equations, \eqref{eq::KdV_n+1}, \eqref{eq::KdV_n+2}, and \eqref{eq::5_KdV_n+k} are well-posed, and since the value of epsilon is small, we know that the formal expansion \eqref{eq::form_exp} is being dominated by \((n_{\pm}^{(1)},v_{\pm}^{(1)},\phi^{(1)})\). With the help of relationships from \eqref{eq::rely_on_n+1}, we can find our solution to \eqref{eq::KdV_n+1}, and obtain the governing piece. This fact is summarized in Theorem~\ref{thm::5_solve_n1KdV}.

\begin{center}{\rule{4cm}{0.4pt}}\end{center}
\begin{theorem}\label{thm::5_solve_n1KdV}
    Let \(s_1 \geq 2\) be a sufficiently large integer. Then for any given initial data \(n_{+}^{(1)}(Y,0) \in H^{s_1}(\mathbb{R})\) there exists a \(\tilde{T}>0\) such that the initial value problem \eqref{eq::KdV_n+1} with \eqref{eq::rely_on_n+1} has a unique solution 
    \begin{equation*}\left(n_{\pm}^{(1)}, v_{\pm}^{(1)}, \phi^{(1)} \right)\in L^\infty\left(H^{s_1}(\mathbb{R}); -\tilde{T}, \tilde{T}\right)
    \end{equation*} 
    with initial data \(\left(n_{\pm}^{(1)}(Y,0), cn_{\pm}^{(1)}(Y,0), (1 - \tau_i m_e)n_{+}^{(1)}(Y,0)\right)\). Furthermore, by the conservation laws of the KdV equation, we can extend the solution to any time interval \([-T,T]\).
\end{theorem}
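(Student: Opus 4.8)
The plan is to recognize \eqref{eq::KdV_n+1} as a standard Korteweg--de Vries equation, prove local well-posedness in \(H^{s_1}(\mathbb{R})\) by the energy method, and then upgrade to a global solution using the conserved quantities of KdV; the algebraic relations \eqref{eq::rely_on_n+1} transfer the result to the full tuple at no extra cost. First I would record that the dispersion coefficient \(\beta := (c^2 m_e - 1)^2/(2c(1+m_e))\) is strictly positive (since \(c>0\) by the choice of root in \eqref{eq::C2} and \(c^2 m_e \neq 1\) for the physical parameters) and that the nonlinear coefficient \(c\) is nonzero. An affine rescaling of the unknown together with a rescaling of \(Y\) and \(T\) then reduces \eqref{eq::KdV_n+1} to the canonical form \(\partial_T u + u\,\partial_Y u + \partial_Y^3 u = 0\), so it suffices to solve the canonical KdV initial value problem with data in \(H^{s_1}(\mathbb{R})\). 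Once \(n_{+}^{(1)}\) is obtained, \eqref{eq::rely_on_n+1} defines \(n_{-}^{(1)}, v_{\pm}^{(1)}, \phi^{(1)}\) as constant multiples of \(n_{+}^{(1)}\); these inherit its \(H^{s_1}\) regularity and match the prescribed initial data, so the tuple lies in \(L^\infty(H^{s_1}(\mathbb{R}); -\tilde{T}, \tilde{T})\) automatically.

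For local existence I would construct approximate solutions by a parabolic/Friedrichs regularization, derive \emph{uniform} \(H^{s_1}\) bounds, and pass to the limit. The core is the a priori estimate: applying \(\partial_Y^{s_1}\), pairing with \(\partial_Y^{s_1} u\) in \(L^2\), and integrating by parts, the dispersive term \(\partial_Y^3\) is skew-adjoint and contributes nothing, while the transport nonlinearity is controlled by a Kato--Ponce commutator (Moser-type) estimate. This yields \(\tfrac{d}{dT}\|u\|_{H^{s_1}}^2 \lesssim \|\partial_Y u\|_{L^\infty}\,\|u\|_{H^{s_1}}^2 \lesssim \|u\|_{H^{s_1}}^3\), where the last step uses the Sobolev embedding \(H^{s_1}(\mathbb{R}) \hookrightarrow W^{1,\infty}(\mathbb{R})\), valid because \(s_1 \geq 2 > 3/2\). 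Integrating this differential inequality gives a lifespan \(\tilde{T}>0\) depending only on \(\|n_{+}^{(1)}(\cdot,0)\|_{H^{s_1}}\). Uniqueness and continuous dependence follow from the same machinery applied at the \(L^2\) (or \(H^{s_1-1}\)) level to the difference of two solutions, and reversibility of the system gives the symmetric interval \([-\tilde{T},\tilde{T}]\).

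For the global extension I would invoke the conservation laws of KdV. The \(L^2\) norm is conserved, and the Hamiltonian together with the one-dimensional Gagliardo--Nirenberg inequality \(\|u\|_{L^\infty}^2 \lesssim \|u\|_{L^2}\,\|\partial_Y u\|_{L^2}\) controls \(\|\partial_Y u\|_{L^2}\), hence the \(H^1\) norm, for all time. To reach \(H^{s_1}\) with \(s_1 \geq 2\), I would use the full hierarchy of polynomial KdV invariants (Miura--Gardner--Kruskal), whose \(k\)th member has leading term \(\int (\partial_Y^k u)^2\,dY\) and lower-order remainder controlled by already-bounded norms; inductively this yields a global-in-time a priori bound on \(\|u\|_{H^{s_1}}\). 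Since the local \(H^{s_1}\) norm therefore cannot blow up in finite time, the standard continuation argument extends the local solution to any \([-T,T]\).

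The main obstacle is twofold. Locally, it is the apparent loss of one derivative in \(u\,\partial_Y u\) when estimating in \(H^{s_1}\); this is resolved precisely by the skew-adjointness of \(\partial_Y^3\) together with the commutator estimate that trades the lost derivative for \(\|\partial_Y u\|_{L^\infty}\). Globally, the decisive point is that \(L^2\) and Hamiltonian conservation control only the \(H^1\) norm, whereas continuation requires \(H^{s_1}\) with \(s_1 \geq 2\); because \(H^1(\mathbb{R}) \not\hookrightarrow W^{1,\infty}(\mathbb{R})\), one cannot close a Gronwall argument from \(H^1\) data alone, so the genuine work lies in exploiting the higher conserved quantities to bound \(\|u\|_{H^{s_1}}\) uniformly in time.
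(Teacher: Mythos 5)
Your proposal is correct, but it takes a genuinely different route from the paper. The paper's proof of this theorem is essentially a citation: it appeals to the classical well-posedness theory for KdV (Bona--Smith and related works) and then exhibits the explicit $\mathrm{sech}^2$ soliton \eqref{eq:kdv_n+1_sol}, derived in Appendix~B by the standard traveling-wave quadrature. You instead sketch a self-contained proof of the classical result itself: Friedrichs/parabolic regularization, the $H^{s_1}$ energy estimate closed via skew-adjointness of $\partial_Y^3$ and a Kato--Ponce commutator bound, an $L^2$-level uniqueness argument, and then globalization through the polynomial KdV invariants (correctly noting that $L^2$ and Hamiltonian conservation alone only reach $H^1$, so the higher invariants are needed for $s_1 \geq 2$). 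Your preliminary reduction is also sound: you check that the dispersion coefficient is nonzero for the physical parameters (equivalent to $\tau_i m_e \neq 1$, which holds here) and that the linear relations \eqref{eq::rely_on_n+1} transfer the regularity to the full tuple for free. What the paper's approach buys is brevity and an explicit solution formula for soliton initial data; what yours buys is an actual argument covering \emph{arbitrary} $H^{s_1}$ initial data, which is what the theorem statement really asserts --- the paper's displayed soliton only witnesses one particular solution and does not by itself establish the general initial value problem. In that sense your proposal is closer to a complete proof of the stated theorem than the paper's own text, which delegates the substance to the references.
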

\begin{proof}
    This is a classical result for the Korteweg-de Vries equation \cite{bona1975initial, gardner1967, hirota1981, ma2005}. In fact, such a solution to \eqref{eq::KdV_n+1} is 
    \begin{equation}\label{eq:kdv_n+1_sol}
        n_+^{(1)}(Y,T) \;=\; \frac{3\tilde\mu}{c} \text{sech}^2\left((Y - \tilde\mu T) \sqrt{\frac{c \tilde\mu \left(m_e+1\right)}{2\left(c^2 m_e-1\right)^2}}\right) 
    \end{equation}
    where \(\tilde\mu\) is the speed of the wave. The calculation is given in Appendix~\ref{app::kdv_sol}.
\end{proof}
\begin{center}{\rule{4cm}{0.4pt}}\end{center}

Furthermore, not only are the relationships in \eqref{eq::rely_on_n+1} subject to the solution of a Korteweg de-Vries equation, but the higher order terms in \eqref{eq::5_rely_on_n+k} for \(k\geq 2\) are as well. Because everything funnels through \eqref{eq::5_rely_on_n+k} and \eqref{eq::5_KdV_n+k}, we can obtain the whole solution \((n_{\pm}^{(k)},v_{\pm}^{(k)},\phi^{(k)})\) for \(k\geq 1\). 
A theorem and proof of this is as follows.

\begin{center}{\rule{4cm}{0.4pt}}\end{center}
\begin{theorem}\label{thm::5_solve_nkKdV}
    Let \(k\geq 2\) and \(s_k \leq s_1 - 3(k-1)\) be sufficiently large integers. 
    Then for any \(T>0\) and any given initial data \(\left(n_{\pm}^{(k)}(Y,0), v_{\pm}^{(k)}(Y,0), \phi^{(k)}(Y,0) \right)~\in~H^{s_k}(\mathbb{R})\) satisfying \eqref{eq::5_rely_on_n+k}, the initial value problem \eqref{eq::5_KdV_n+k} with \eqref{eq::5_rely_on_n+k} has a unique solution 
    \begin{equation*}\left(n_{\pm}^{(k)}, v_{\pm}^{(k)}, \phi^{(k)} \right)~\in L^\infty\left(H^{s_k}(\mathbb{R}) ; -T, T\right). 
    \end{equation*}
\end{theorem}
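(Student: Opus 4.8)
The plan is to run an induction on $k$, taking Theorem~\ref{thm::5_solve_n1KdV} as the base input: the profile $n_+^{(1)}$ (hence the entire first-order triple) is already known globally in time and as smooth as the data permits. Fix $k\ge 2$ and assume, as an induction hypothesis, that $n_+^{(1)},\dots,n_+^{(k-1)}$ have been constructed on $[-T,T]$ with the regularity asserted at their respective levels. Then the source $F^{(k-1)}$ in \eqref{eq::5_KdV_n+k} is a \emph{known} function of the lower profiles, and \eqref{eq::5_KdV_n+k} reduces to a \emph{linear}, variable-coefficient, inhomogeneous Airy-type equation for the single unknown $u:=n_+^{(k)}$,
\[\partial_T u + c\,\partial_Y\!\big(n_+^{(1)} u\big) + \alpha\,\partial_Y^3 u = F^{(k-1)}, \qquad \alpha := \tfrac{(c^2 m_e - 1)^2}{2c(1+m_e)} > 0,\]
in which $n_+^{(1)}$ is a prescribed smooth coefficient and $F^{(k-1)}$ a prescribed forcing. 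Once $u$ is produced, the remaining components $n_-^{(k)},v_\pm^{(k)},\phi^{(k)}$ are \emph{defined} through the relations \eqref{eq::5_rely_on_n+k}--\eqref{eq::5_rely_on_n+k_spec} (whose correction functions $f,g,h,j$ depend only on the strictly lower orders); since these relations are precisely what collapses the full system \eqref{eq::coeff_ek} to the scalar equation, the resulting tuple automatically solves the $\varepsilon^{k+1}$ system, and no independent compatibility check is required beyond observing that the relations propagate in time because they are consequences of \eqref{eq::coeff_ek}.

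For the scalar linear problem I would use the standard energy method, which (the equation being linear in $u$) delivers existence and uniqueness from a single a priori estimate. Existence follows either by a parabolic regularization plus compactness, or more cleanly by treating the constant-coefficient Airy part $-\alpha\partial_Y^3$ as the generator of a unitary group on every $H^s$ and closing a Duhamel iteration against the first-order perturbation $-c\partial_Y(n_+^{(1)}\,\cdot\,)$ and the source $F^{(k-1)}$. Applying $\Lambda^{s_k}=(1-\partial_Y^2)^{s_k/2}$ and pairing with $\Lambda^{s_k}u$ in $L^2$, the dispersive term drops out entirely since $\int (\Lambda^{s_k}u)\,\partial_Y^3(\Lambda^{s_k}u)\,dY=0$; the coupling term is handled by a Kato--Ponce commutator estimate, contributing at worst $\tfrac12\|\partial_Y n_+^{(1)}\|_{L^\infty}\|u\|_{H^{s_k}}^2$ plus a remainder bounded by $C\|n_+^{(1)}\|_{H^{s_k+1}}\|u\|_{H^{s_k}}^2$; and the source yields $\|F^{(k-1)}\|_{H^{s_k}}\|u\|_{H^{s_k}}$. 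Altogether
\[\frac{d}{dT}\|u\|_{H^{s_k}}^2 \le C\big(\|n_+^{(1)}\|_{H^{s_k+1}}\big)\,\|u\|_{H^{s_k}}^2 + \|F^{(k-1)}\|_{H^{s_k}}^2 + \|u\|_{H^{s_k}}^2,\]
and Grönwall's inequality closes the estimate on the whole of $[-T,T]$. The decisive point is that, with no genuine nonlinearity in $u$, the multiplier of $\|u\|_{H^{s_k}}^2$ depends only on the already-controlled coefficient $n_+^{(1)}$ and not on $u$ itself, so finite-time blow-up is impossible and the solution is automatically defined on all of $[-T,T]$; this is exactly what permits $T>0$ to be arbitrary, in contrast to the local $\tilde T$ of Theorem~\ref{thm::5_solve_n1KdV}.

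The main obstacle is the regularity bookkeeping, which is what forces the hypothesis $s_k \le s_1 - 3(k-1)$. The source $F^{(k-1)}$ and the corrections in \eqref{eq::5_rely_on_n+k_spec} are built from products and up to third-order derivatives of $n_+^{(1)},\dots,n_+^{(k-1)}$, so each passage from one level to the next costs up to three derivatives; I would make this quantitative by strengthening the induction hypothesis to $n_+^{(m)}\in L^\infty\big(H^{s_1-3(m-1)};-T,T\big)$ and using Moser-type product and composition estimates — the latter applied to the multinomial expression $\ell_\pm^{(k+1)}$ of \eqref{eq::ell_def} — to confirm that the derivative count degrades by exactly three per level, leaving $F^{(k-1)}\in H^{s_k}$. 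The one genuinely delicate subpoint is the antiderivative $\int\partial_T n_+^{(\cdot)}\,dY$ appearing in $g,h,j$: one must verify it is well defined and regularity-preserving, which holds because the KdV equation at the previous level expresses $\partial_T n_+^{(\cdot)}$ as a total $Y$-derivative (the transport and dispersive terms manifestly, and $F$ by its derivative structure), so $\partial_Y^{-1}$ acts on an exact derivative and returns an $H^{s_k}$ function under the natural decay assumption on the data.
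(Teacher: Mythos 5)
Your proposal is correct and follows essentially the same route as the paper: the equation is linear in $n_+^{(k)}$ with known coefficient $n_+^{(1)}$ and known forcing $F^{(k-1)}$, existence comes from Duhamel's principle against a linear propagator, and uniqueness together with global-in-time bounds comes from $H^{s}$ energy estimates closed by Gr\"onwall's inequality. The only notable differences are that the paper invokes Sachs' explicit solution of the linearized KdV as the homogeneous propagator where you use the constant-coefficient Airy group (or parabolic regularization), and that your accounting of the derivative loss behind $s_k \le s_1 - 3(k-1)$ and of the well-definedness of the antiderivatives in \eqref{eq::5_rely_on_n+k_spec} is more explicit than the paper's.
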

\begin{proof}
    Consider the equation \eqref{eq::5_KdV_n+k} for \(k\geq 2\). Let \(T>0\) be arbitrarily fixed and \(s_1\) be sufficiently large as in Thereom~\ref{thm::5_solve_n1KdV}.
    
    \textit{Existence}: Sach's 1983 paper, \cite{sachs1983completeness}, provides an explicit solution to the Cauchy problem for the linearized KdV, which is the left-hand side of \eqref{eq::5_KdV_n+k}. So, we have a solution for the homogeneous equation we can use to obtain a solution to our inhomogeneous problem by invoking Duhamel's Principle. Moreover, such a solution is asymptotically null. That is, \(\lim_{Y\to \pm\infty} \lvert n_+^{(k)}(Y,T) \rvert= 0\).
    
    \textit{Uniqueness}: Suppose that \(n_{+,1}^{(k)}\) and \(n_{+,2}^{(k)}\) are both solutions to \eqref{eq::5_KdV_n+k}. Call
    \begin{equation*}
        n^{(k)} \coloneqq n_{+,1}^{(k)} - n_{+,2}^{(k)}.
    \end{equation*}
    Then 
    \begin{equation*}\partial_{T} n^{(k)} + c\partial_{Y}\left(n_{+}^{(1)}n^{(k)}\right) + \frac{(c^2 m_e -1)^2}{2c(1+m_e)}\partial_{Y}^3 n^{(k)} = 0. 
    \end{equation*}
    Multiply this by \(n^{(k)}\) and integrate with respect to \(Y\) on \(\mathbb{R}\) to obtain 
    \begin{equation*}\int_{\mathbb{R}} n^{(k)}\partial_{T} n^{(k)} dY + c\int_{\mathbb{R}} n^{(k)}\partial_{Y}\left(n_{+}^{(1)}n^{(k)}\right) dY + \frac{(c^2 m_e -1)^2}{2c(1+m_e)}\int_{\mathbb{R}} n^{(k)}\partial_{Y}^3 n^{(k)} dY = 0. 
    \end{equation*} 
    Recall that \(n_{+,1}^{(k)}\) and \(n_{+,2}^{(k)}\) are asymptotically null. Hence, \(n^{(k)}\) is as well. Combining integration by parts along with this fact, the third integral vanishes in the limit. Therefore we have 
    \begin{equation*}
    \int_{\mathbb{R}} n^{(k)}\partial_{T} n^{(k)} dY + c\int_{\mathbb{R}} n^{(k)}\partial_{Y}\left(n_{+}^{(1)}n^{(k)}\right) dY = 0,
    \end{equation*} 
    or 
    \begin{equation*}
    \int_{\mathbb{R}} n^{(k)}\partial_{T} n^{(k)} dY + c\int_{\mathbb{R}} n_{+}^{(1)} n^{(k)}\partial_{Y}n^{(k)} dY + c\int_{\mathbb{R}} (n^{(k)})^2 \partial_{Y}n_{+}^{(1)} dY = 0.
    \end{equation*} 
    Define 
    \begin{equation*} 
    E^2(T) = \frac{1}{2}\lVert n^{(k)} \rVert_{H^1}^2.
    \end{equation*} 
    Then by differentiating both sides with respect to \(T\), we have \begin{equation*}
    \frac{dE^2(T)}{dT} = \frac{1}{2} \frac{d}{dT}\lVert n^{(k)} \rVert_{H^1}^2 = \int_{\mathbb{R}} n^{(k)} \partial_T n^{(k)} dY + \int_{\mathbb{R}} \partial_Y n^{(k)} \partial_Y\partial_T n^{(k)} dY.
    \end{equation*}
    On the first integral, we can apply \eqref{eq::5_KdV_n+k}. Observe 
    \begin{align*}
        \int n^{(k)} \partial_T n^{(k)} dY & = -c\int_{\mathbb{R}} n_{+}^{(1)} n^{(k)}\partial_{Y}n^{(k)} dY - c\int_{\mathbb{R}} (n^{(k)})^2 \partial_{Y}n_{+}^{(1)} dY \\[3pt]
        & \leq c \lVert n_{+}^{(1)} \rVert_{L^\infty} \int_{\mathbb{R}}  \partial_{Y} \left((n^{(k)})^2\right) dY + c \lVert \partial_{Y} n_{+}^{(1)} \rVert_{L^\infty} \lVert n^{(k)} \rVert_{L^2}^2 \\[3pt]
        & \leq \tilde{C} \lVert n^{(k)} \rVert_{H^1}^2
    \end{align*}
    For the second integral,
    \begin{align*}
        \int_{\mathbb{R}} \partial_Y n^{(k)} \partial_Y\partial_T n^{(k)} dY & = \int_{\mathbb{R}} \partial_Y n^{(k)} \left(-c\partial_{Y}^2\left(n_{+}^{(1)}n^{(k)}\right) - \frac{(c^2 m_e -1)^2}{2c(1+m_e)}\partial_{Y}^4 n^{(k)}\right) dY \\[3pt]
        & = -c \int_{\mathbb{R}} \partial_Y n^{(k)}\partial_{Y}^2\left(n_{+}^{(1)}n^{(k)}\right) dY - \frac{(c^2 m_e -1)^2}{2c(1+m_e)} \int_{\mathbb{R}} \partial_Y n^{(k)} \partial_{Y}^4 n^{(k)} dY. 
    \end{align*}
    The latter disappears in the limit through integration by parts. The former expands to
    \begin{align*}
        \int_{\mathbb{R}} \partial_Y n^{(k)} \partial_Y\partial_T n^{(k)} dY & = -c \int_{\mathbb{R}} n_{+}^{(1)} \partial_Y n^{(k)} \partial_Y^2 n^{(k)} dY - 2c \int_{\mathbb{R}} \partial_Y n_{+}^{(1)} (\partial_Y n^{(k)})^2 dY \\
        & \quad - c \int_{\mathbb{R}} \partial_Y^2 n_{+}^{(1)} n^{(k)} \partial_Y n^{(k)} dY. 
    \end{align*}
    After applying integration by parts, we can bound this integral as
    \begin{equation*}
        \int_{\mathbb{R}} \partial_Y n^{(k)} \partial_Y\partial_T n^{(k)} dY \leq 2c \lVert \partial_Y n_{+}^{(1)} \rVert_{L^\infty} \lVert \partial_Y n^{(k)} \rVert_{L^2}^2 \leq \tilde{C} \lVert n^{(k)} \rVert_{H^1}^2.
    \end{equation*}
    Thus 
    \begin{equation*}
    \frac{dE^2(T)}{dT} \leq \tilde{C} E^2(T) \quad\implies\quad \dot{E}(T) \leq \tilde{C} E(T).
    \end{equation*} 
    We can use an integrating factor to find 
    \begin{equation*}
    E(T) \leq E(0)e^{\tilde{C}T}.
    \end{equation*} 
    Since \(n_{+,1}^{(k)}\) and \(n_{+,2}^{(k)}\) are both solutions to \eqref{eq::5_KdV_n+k}, they have the same initial condition. Hence, \(n^{(k)}(Y,0) = 0\) and ergo, \(E(0)=0\). This shows that \(E(T) = 0 \). Therefore, \(n_{+,1}^{(k)} \equiv n_{+,2}^{(k)} \) for all finite \(T>0\), so the solution to \eqref{eq::5_KdV_n+k} is unique.
    
    \textit{Boundedness}: Since \(F^{(k-1)}\) only depends on \(n_+^{(i-1)} \in H^{s_i}(\mathbb{R})\) for \(1\leq i \leq k-1\), we know it is bounded in \(L^{\infty}(H^{s_{k-1}}; -T,T)\) for all \(T\geq 0\). Now, multiply \eqref{eq::5_KdV_n+k} by \(n_+^{(k)}\). We will get 
    \begin{equation*}
        n_+^{(k)}\partial_{T} n_{+}^{(k)} + cn_+^{(k)}\partial_{Y}\left(n_{+}^{(1)}n_{+}^{(k)}\right) + \frac{(c^2 m_e -1)^2}{2c(1+m_e)}n_+^{(k)} \partial_{Y}^3 n_{+}^{(k)} = n_+^{(k)}F^{(k-1)}
    \end{equation*}
    Integrating with respect to \(Y\) over \(\mathbb{R}\), we see
    \begin{align*}
        \frac{d}{dT}\bigg[\frac{1}{2} \lVert n_+^{(k)} \rVert_{L^2}^2\bigg] & \leq c \int_{\mathbb{R}} n_{+}^{(1)}n_+^{(k)}\partial_{Y}n_+^{(k)} dY + c \int_{\mathbb{R}} (n_+^{(k)})^2 \partial_{Y}n_+^{(1)} dY \\[3pt]
        & \quad + \frac{(c^2 m_e -1)^2}{2c(1+m_e)} \int_{\mathbb{R}} n_+^{(k)} \partial_{Y}^3 n_{+}^{(k)} dY + \int_{\mathbb{R}} n_+^{(k)}F^{(k-1)} dY.
    \end{align*}
    We can apply integration by parts to find many of the terms vanish in the limit, and we bound the remaining as
    \begin{align*}
        \frac{d}{dT}\bigg[\frac{1}{2} \lVert n_+^{(k)} \rVert_{L^2}^2\bigg] & \leq \frac{c}{2}\int_{\mathbb{R}} (n_+^{(k)})^2 \partial_{Y}n_+^{(1)} dY + \int_{\mathbb{R}} n_+^{(k)}F^{(k-1)} dY \\[10pt]
        & \leq \bar{C} \lVert \partial_{Y}n_+^{(1)} \rVert_{L^\infty} \lVert n_+^{(k)} \rVert_{L^2}^2 + \lVert F^{(k-1)} \rVert_{L^\infty} \lVert n_+^{(k)} \rVert_{L^2} \\[3pt]
        & \leq \bar{C} \lVert n_+^{(k)} \rVert_{L^2}^2 + \bar{C} \lVert n_+^{(k)} \rVert_{L^2}.
    \end{align*}
    Along this same vein, we can take \(\partial_{Y}^{s_k}\) of \eqref{eq::5_KdV_n+k}, and multiply it by \(\partial_{Y}^{s_k} n_+^{(k)}\) to find 
    \begin{equation*}\frac{d}{dT}\bigg[\frac{1}{2} \lVert n_+^{(k)} \rVert_{H^{s_k}}^2\bigg] \leq \bar{C} \lVert n_+^{(k)} \rVert_{H^{s_k}}^2 + \bar{C} \lVert n_+^{(k)} \rVert_{H^{s_k}}. 
    \end{equation*}
    Now, define 
    \begin{equation*}
    \eta^2(T) \coloneqq \frac{1}{2} \lVert n_+^{(k)} \rVert_{H^{s_k}}^2 .
    \end{equation*}
    Then \(\eta(0)\geq 0\), and by our previous inequality, 
    \begin{equation*}
    \dot{\eta}(T) \leq \bar{C}\eta(T) + \bar{C}. 
    \end{equation*} 
    We will show \(\exists \hat{C} > 0\) such that 
    \begin{equation*} 
    \sup_{0\leq T \leq T_{*}} \eta (T) \leq \hat C 
    \end{equation*} 
    for a fixed \(T_{*} >0\). 
    
    So, let \(\theta = \gamma(e^{\bar{C} T_{*}}(\eta(0) + 1) - 1) > 0\) with \(\gamma > 1\) and define
    \begin{equation*}
    T_{**} = \inf \{ S \, | \, \sup_{0\leq T \leq S} \eta(T) \leq \theta \}
    \end{equation*} 
    Then either \(T_{**} > T_{*}\) or \(T_{**} \leq T_{*}\). If the former, then we are done with \(\hat C = \theta\) since \begin{equation*}\sup_{0\leq T \leq T_{*}} \eta(T) \leq \sup_{0\leq T \leq T_{**}} \eta(T) \leq \theta
    \end{equation*}
    A specific value of \(\theta\) plays no role here. If the latter, under the assumption that \(\eta(T)\) is continuous, we know 
    \(
    \eta(T_{**}) = \theta. 
    \)
    For \(0 \leq T \leq T_{**}\), we can use our handy-dandy integrating factor \(e^{-\bar{C}T}\) to get
    \begin{equation*}\dot{\eta}(T)e^{-\bar{C}T} - \bar{C}\eta(T)e^{-\bar{C}T} \leq \bar{C}e^{-\bar{C}T}.
    \end{equation*} 
    This is the same as 
    \begin{equation*}\frac{d}{dT} \left( \eta(T)e^{-\bar{C}T}\right) \leq \bar{C}e^{-\bar{C}T}. \end{equation*} Integrating both sides with respect to \(T\) from \(0\) to \(\tau\) we get 
    \(
    \eta(\tau)e^{-\bar{C} \tau} - \eta(0) \leq 1 - e^{-\bar{C}\tau}.
    \)
    We will do some rewriting to obtain
    \begin{equation*}
    \eta(T) \leq \eta(0)e^{\bar{C} T} + e^{\bar{C} T} - 1.
    \end{equation*} 
    Since \(T \leq T_{**} \leq T_{*}\), this is
    \begin{equation*}
    \eta(T) \leq \eta(0)e^{\bar{C} T_{*}} + e^{\bar{C} T_{*}} - 1 .
    \end{equation*} 
    In particular, this estimate holds at \(T_{**}\). Hence,
    \begin{equation*}
    \theta = \eta(T_{**}) \leq e^{\bar{C} T_{*}}(\eta(0) + 1) - 1.
    \end{equation*}
    But this means 
    \(
        \theta \leq \frac{\theta}{\gamma}
    \)
    with  \(\gamma > 1 \) -- a clear contradiction! Hence, we have the boundedness,
    \begin{equation*}
    \sup_{0\leq T \leq T_{*}} \lVert n_+^{(k)}(\cdot, T) \rVert_{H^{s_k}} \leq \hat{C}\left(\lVert n_+^{(k)}(\cdot,0) \rVert_{H^{s_k}}\right).
    \end{equation*} 
    Moreover, \(n_+^{(k)}\) is a global solution as conservation laws of the KdV equation allow us to extend our bound to any interval \([-T,T]\). 
\end{proof}

\begin{center}{\rule{4cm}{0.4pt}}\end{center}

We have now shown that we can find a solution for each \(k^{\text{th}}\) step of our formal expansion \eqref{eq::form_exp}. As \(k\) increases, the orders of \(\varepsilon\) become larger, and the impact of \(\left(n_{\pm}^{(k)}, v_{\pm}^{(k)}, \phi^{(k)}\right)\) on the overall expansion diminishes. Essentially, these minuscule pieces can be combined. 

\subsection{Bounding the Residual Terms}\label{5_kdv_residual}

We subsequently break up the formal expansion to include a remainder segment that encompasses the higher-order \(\varepsilon\) terms. This new formulation is
\begin{align*}
    n_{\pm} & = 1 + \sum_{k=1}^{m} \varepsilon^k n_{\pm}^{(k)}\left(\varepsilon^{1/2}(x-ct),\varepsilon^{3/2}t\right) + \varepsilon^{m-1} n_{\pm}^{(R_\varepsilon)}(x,t) \coloneqq N_{\pm}^{(m)}(Y,T) + \varepsilon^{m-1} n_{\pm}^{(R_\varepsilon)}(x,t), \\[2pt]
    v_{\pm} & = \phantom{1 + } \sum_{k=1}^{m} \varepsilon^k v_{\pm}^{(k)}\left(\varepsilon^{1/2}(x-ct),\varepsilon^{3/2}t\right) + \varepsilon^{m-1} v_{\pm}^{(R_\varepsilon)}(x,t) \coloneqq V_{\pm}^{(m)}(Y,T) + \varepsilon^{m-1} v_{\pm}^{(R_\varepsilon)}(x,t), \\[2pt]
    \phi & = \phantom{1 + } \sum_{k=1}^{m} \varepsilon^k \phi^{(k)}\left(\varepsilon^{1/2}(x-ct),\varepsilon^{3/2}t\right) + \varepsilon^{m-1} \phi^{(R_\varepsilon)}(x,t) \coloneqq \Phi^{(m)}(Y,T) + \varepsilon^{m-1} \phi^{(R_\varepsilon)}(x,t), 
\end{align*}
where \(\left(n_{\pm}^{(k)}, v_{\pm}^{(k)}, \phi^{(k)} \right)\) satisfy \eqref{eq::5_rely_on_n+k} and \eqref{eq::5_KdV_n+k} for \(1\leq k \leq m\). We plug into \eqref{eq::5hotrescale} to get
\begin{subequations}\label{eq::5_remainder_sys}
\begin{empheq}[left = (\mathscr{R}_m) \empheqlbrace\,]{align}
    \label{eq::npm_remainder}
    \partial_t n_{\pm}^{(R_\varepsilon)} + \partial_x\left[N_{\pm}^{(m)}v_{\pm}^{(R_\varepsilon)} + V_{\pm}^{(m)}n_{\pm}^{(R_\varepsilon)} + \varepsilon^{m-1}n_{\pm}^{(R_\varepsilon)}v_{\pm}^{(R_\varepsilon)}\right] & = \text{Res}_{n_\pm}^{(m)}, \\[4pt]
    \label{eq::vp_remainder}
    \nonumber \partial_t v_{+}^{(R_\varepsilon)} + \partial_x \left[v_{+}^{(R_\varepsilon)}V_{+}^{(m)} + \tfrac{\varepsilon^{m-1}}{2}(v_{+}^{(R_\varepsilon)})^2\right] \\
    + \partial_x \left[\tau_i \ln(N_{+}^{(m)} + \varepsilon^{m-1}n_{+}^{(R_\varepsilon)}) + \phi^{(R_\varepsilon)} \right] & = \text{Res}_{v_+}^{(m)}, \\[4pt]
    \label{eq::vm_remainder}
    \nonumber m_e\left(\partial_t v_{-}^{(R_\varepsilon)} + \partial_x \left[v_{-}^{(R_\varepsilon)}V_{-}^{(m)} + \tfrac{\varepsilon^{m-1}}{2}(v_{-}^{(R_\varepsilon)})^2\right]\right) \\
    \qquad + \partial_x\left[ \ln\left(N_{-}^{(m)} + \varepsilon^{m-1}n_{-}^{(R_\varepsilon)}\right) - \phi^{(R_\varepsilon)} \right] & = \text{Res}_{v_-}^{(m)}, \\[4pt]
    \label{eq::phi_remainder}
    \partial_x^2 \phi^{(R_\varepsilon)} - n_{-}^{(R_\varepsilon)} + n_{+}^{(R_\varepsilon)} & = \text{Res}_{\phi}^{(m)},
\end{empheq}
\end{subequations}
the Remainder System which contains terms,
\begin{align*}
    \text{Res}_{n_\pm}^{(m)} & = -\varepsilon^{1-m}\Big[\partial_t N_{\pm}^{(m)} + \partial_x (N_{\pm}^{(m)}V_{\pm}^{(m)})\Big], \\[4pt]
    \text{Res}_{v_+}^{(m)} & = -\varepsilon^{1-m}\Big[\partial_t V_{+}^{(m)} + \tfrac{1}{2}\partial_x (V_{+}^{(m)})^2 + \tau_i\sum_{k=1}^{m} \frac{(-1)^{k+1}}{k}\partial_x\left((N_{+}^{(m)} - 1)^k\right) + \partial_x \Phi^{(m)}\Big], \\[4pt]
    \text{Res}_{v_-}^{(m)} & = -\varepsilon^{1-m}\Big[m_e\left(\partial_t V_{-}^{(m)} + \tfrac{1}{2}\partial_x (V_{-}^{(m)})^2\right) + \sum_{k=1}^{m} \frac{(-1)^{k+1}}{k}\partial_x\left((N_{-}^{(m)} - 1)^k\right) - \partial_x \Phi^{(m)}\Big], \\[4pt]
    \text{Res}_{\phi_{\phantom{+}}}^{(m)} & = -\varepsilon^{1-m}\Big[\partial_x^2 \Phi^{(m)} - N_{-}^{(m)} + N_{+}^{(m)}\Big].
\end{align*}

They are the residual quantity, the amount the initial estimate fails to solve the system, scaled by the expected error \(\varepsilon^{1-m}\). To study this error, we will start by measuring the size of these terms. Seemingly, we run into issues when applying values of \(m>1\), as \(\varepsilon^{1-m}\) blows up. However, not all hope is lost because these terms can be rewritten with a power of \(\varepsilon\) that does not depend on the expansion limit, \(m\).

\begin{lemma}\label{lem::res_dens}
    We can rewrite the density error-scaled residual terms as 
    \begin{align*}
    \text{Res}_{n_\pm}^{(m)} = -\varepsilon^{5/2} \left(\partial_T n_\pm^{(m)} + \sum_{j=1}^{m} \sum_{k=j}^{m} \varepsilon^{j-1} \partial_Y\left(n_{\pm}^{(k)} v_{\pm}^{(m+1+j-k-1)}\right)\right).
    \end{align*}
\end{lemma}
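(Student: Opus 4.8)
The plan is to work directly from the definition $\text{Res}_{n_\pm}^{(m)} = -\varepsilon^{1-m}[\partial_t N_\pm^{(m)} + \partial_x(N_\pm^{(m)} V_\pm^{(m)})]$ and push everything into the slow variables. Since $Y = \varepsilon^{1/2}(x-ct)$ and $T = \varepsilon^{3/2}t$, the chain rule gives $\partial_x = \varepsilon^{1/2}\partial_Y$ and $\partial_t = \varepsilon^{3/2}\partial_T - c\,\varepsilon^{1/2}\partial_Y$. Substituting the finite sums $N_\pm^{(m)} = 1 + \sum_{k=1}^m \varepsilon^k n_\pm^{(k)}$ and $V_\pm^{(m)} = \sum_{k=1}^m \varepsilon^k v_\pm^{(k)}$ and collecting powers of $\varepsilon$, I expect $\partial_t N_\pm^{(m)} + \partial_x(N_\pm^{(m)} V_\pm^{(m)})$ to break into three groups: a slow-time group $\sum_{k}\varepsilon^{k+3/2}\partial_T n_\pm^{(k)}$, a linear group $\sum_{k}\varepsilon^{k+1/2}\bigl(\partial_Y v_\pm^{(k)} - c\,\partial_Y n_\pm^{(k)}\bigr)$, and a quadratic group $\sum_{k,l}\varepsilon^{k+l+1/2}\partial_Y\bigl(n_\pm^{(k)} v_\pm^{(l)}\bigr)$.

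The engine of the argument is the density hierarchy itself. The $k=1$ term of the linear group vanishes identically by the first (resp.\ second) equation of $(\mathscr{E}_0)$ in \eqref{eq::coeff_e1}, while for $k\geq 2$ the density equation \eqref{eq::coeff_ek_a} (resp.\ \eqref{eq::coeff_ek_b}) taken at level $k-1$ rewrites $\partial_Y v_\pm^{(k)} - c\,\partial_Y n_\pm^{(k)} = -\partial_T n_\pm^{(k-1)} - \partial_Y\bigl(\sum_{j=1}^{k-1} n_\pm^{(j)} v_\pm^{(k-j)}\bigr)$. Feeding this substitution back in, the slow-time pieces it produces should \emph{telescope} against the original slow-time group: after shifting indices, every $\partial_T n_\pm^{(k)}$ with $1\leq k\leq m-1$ cancels, leaving only the top-order survivor $\varepsilon^{m+3/2}\partial_T n_\pm^{(m)}$.

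It then remains to reconcile the two families of quadratic products. Indexing each product by its pair of superscripts $(p,q)$, with $p$ on $n_\pm$ and $q$ on $v_\pm$, the substituted terms contribute $-\sum_{p,q\geq 1,\ p+q\leq m}\varepsilon^{p+q+1/2}\partial_Y(n_\pm^{(p)} v_\pm^{(q)})$, and the original quadratic group contributes $+\sum_{1\leq p,q\leq m}\varepsilon^{p+q+1/2}\partial_Y(n_\pm^{(p)} v_\pm^{(q)})$. Since $p+q\leq m$ forces $p,q\leq m$, the former is a subcollection of the latter, so the difference keeps exactly the products with $1\leq p,q\leq m$ and $p+q\geq m+1$. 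The final bookkeeping step is to verify that $(p,q)\mapsto(j,k)=(p+q-m,\,p)$ is a bijection from this surviving set onto $\{(j,k): 1\leq j\leq m,\ j\leq k\leq m\}$, under which $\varepsilon^{p+q+1/2}=\varepsilon^{m+j+1/2}$ and $v_\pm^{(q)}=v_\pm^{(m+1+j-k-1)}$, recasting the sum into the stated form. Collecting the slow-time survivor with this double sum, factoring out $\varepsilon^{m+3/2}$, and multiplying by $-\varepsilon^{1-m}$ yields the claimed $-\varepsilon^{5/2}(\cdots)$.

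I expect the only genuine difficulty to be the combinatorial bookkeeping: confirming that the slow-time telescoping is exact with no surviving boundary term except at $k=m$, and checking the quadratic-index bijection carefully, since the constraints $j\leq k\leq m$ are precisely what encode $q\leq m$ together with $p+q\geq m+1$. Everything else is a mechanical application of the chain rule and the already-established hierarchy $(\mathscr{E}_k)$; the identity is \emph{purely algebraic}, so no estimates or analysis enter.
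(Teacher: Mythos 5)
Your proposal is correct, and I verified the bookkeeping: the linear group is killed exactly by $(\mathscr{E}_0)$ at $k=1$ and by the density equations \eqref{eq::coeff_ek_a}--\eqref{eq::coeff_ek_b} at levels $1,\dots,m-1$ for $k\geq 2$; the slow-time telescoping leaves only $\varepsilon^{m+3/2}\partial_T n_\pm^{(m)}$; and the map $(p,q)\mapsto(p+q-m,\,p)$ is indeed a bijection from $\{1\leq p,q\leq m,\ p+q\geq m+1\}$ onto $\{1\leq j\leq m,\ j\leq k\leq m\}$ with $v_\pm^{(q)}=v_\pm^{(m+1+j-k-1)}$, which reproduces the stated double sum after factoring out $\varepsilon^{m+3/2}$ and multiplying by $-\varepsilon^{1-m}$. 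The route is genuinely different from the paper's, which argues by induction on $m$: there, one peels off the top-order term via $N_\pm^{(m)}=N_\pm^{(m-1)}+\varepsilon^m n_\pm^{(m)}$, identifies $\varepsilon^{-1}\text{Res}_{n_\pm}^{(m-1)}$ inside the expansion, substitutes only the single relation $v_\pm^{(m)}=cn_\pm^{(m)}+g^{(m-1)}$ from $(\mathscr{S}_k)$ and \eqref{eq::5_rely_on_n+k_spec}, and then reconciles the resulting double sums with the induction hypothesis through an index-shifting computation. The two proofs consume the same input --- the integrated density hierarchy is exactly what $g^{(m-1)}$ and $h^{(m-1)}$ encode --- but your direct computation applies all $m-1$ substitutions at once and exposes the cancellation as a single telescope, which makes the mechanism more transparent and avoids carrying the full double-sum expression through an inductive step; the paper's induction, in exchange, localizes the new work at each stage to one substitution and parallels the structure it reuses for the considerably messier velocity residuals in Lemmas \ref{lem::res_vel_ion} and \ref{lem::res_vel_elec}. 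One point worth making explicit if you write this up: the hierarchy equations you invoke at levels $1$ through $m-1$ are available precisely because the profiles are constructed to satisfy \eqref{eq::5_rely_on_n+k} and \eqref{eq::5_KdV_n+k} for $1\leq k\leq m$, the same hypothesis the paper's induction step relies on.
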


\begin{proof}
We will proceed by induction. For the base case \(m=1\),
\begin{align*}\text{Res}_{n_\pm}^{(1)} = -\Big[\partial_t N_{\pm}^{(1)} + \partial_x (N_{\pm}^{(1)}V_{\pm}^{(1)})\Big].\end{align*} Applying the Gardner-Morikawa transformation, the residuals become
\begin{align*}\text{Res}_{n_\pm}^{(1)} = -\varepsilon^{1/2}\Big[-c \partial_Y N_{\pm}^{(1)} + \varepsilon \partial_T N_{\pm}^{(1)} + \partial_Y (N_{\pm}^{(1)}V_{\pm}^{(1)})\Big].\end{align*}
Expanding all terms gives
\begin{align*}
    \text{Res}_{n_\pm}^{(1)} & = -\varepsilon^{1/2}\Big[-c \partial_Y (\varepsilon n_{\pm}^{(1)}) + \varepsilon \partial_T (\varepsilon n_{\pm}^{(1)}) + \partial_Y ((1 + \varepsilon n_{\pm}^{(1)})(\varepsilon v_{\pm}^{(1)}))\Big] \\[4pt]
    & = -\varepsilon^{3/2}\Big[-c \partial_Y n_{\pm}^{(1)} + \varepsilon \partial_T n_{\pm}^{(1)} + \partial_Y v_{\pm}^{(1)} +  \varepsilon \partial_Y(n_{\pm}^{(1)} v_{\pm}^{(1)})\Big].
\end{align*}
We will use \eqref{eq::rely_on_n+1} to replace \(v_\pm^{(1)} = cn_\pm^{(1)}\), and obtain
\begin{align*}
    \text{Res}_{n_\pm}^{(1)} & = -\varepsilon^{3/2}\Big[-c \partial_Y n_{\pm}^{(1)} + \varepsilon \partial_T n_{\pm}^{(1)} + c\partial_Y n_{\pm}^{(1)} +  \varepsilon \partial_Y(n_{\pm}^{(1)} v_{\pm}^{(1)})\Big] \\[4pt]
    & = -\varepsilon^{5/2}\Big[\partial_T n_{\pm}^{(1)} + \partial_Y(n_{\pm}^{(1)} v_{\pm}^{(1)})\Big].
\end{align*}

We now make the induction hypothesis
\begin{equation*}
    \text{Res}_{n_\pm}^{(m-1)} = -\varepsilon^{5/2} \left(\partial_T n_\pm^{(m-1)} + \sum_{j=1}^{m-1} \sum_{k=j}^{m-1} \varepsilon^{j-1} \partial_Y\left(n_{\pm}^{(k)} v_{\pm}^{(m+j-k-1)}\right)\right).
\end{equation*}
On the \(m^{\text{th}}\) step, the density residuals are \begin{align*}
\text{Res}_{n_\pm}^{(m)} = -\varepsilon^{1-m}\Big[\partial_t N_{\pm}^{(m)} + \partial_x (N_{\pm}^{(m)}V_{\pm}^{(m)})\Big],
\end{align*} 
which we can rewrite with the Gardner-Morikawa transformation. Following a separation that extracts the \(m^{\text{th}}\) term, e.g. \(N_\pm^{(m)} = N_{\pm}^{(m-1)} + \varepsilon^m n_+^{(m)}\), it becomes
\begin{align*}
    \text{Res}_{n_\pm}^{(m)} = -\varepsilon^{3/2-m}\Big[& -c \partial_Y N_{\pm}^{(m-1)} + \varepsilon \partial_T N_{\pm}^{(m-1)} + \partial_Y \left( N_{\pm}^{(m-1)}V_{\pm}^{(m-1)}\right) - c \varepsilon^m \partial_Y n_{\pm}^{(m)} \\
    & + \varepsilon^{m+1} \partial_T n_{\pm}^{(m)} + \varepsilon^m \partial_Y \left((N_{\pm}^{(m-1)} - 1) v_{\pm}^{(m)} + V_{\pm}^{(m-1)} n_{\pm}^{(m)} + \varepsilon^m n_{\pm}^{(m)}v_{\pm}^{(m)} \right) \\
    & + \varepsilon^m \partial_Y v_{\pm}^{(m)} \Big].
\end{align*}
Notably, this is the same as
\begin{align*}
    \text{Res}_{n_\pm}^{(m)} & = \varepsilon^{-1} \text{Res}_{n_\pm}^{(m-1)} \\
    & \phantom{=} -\varepsilon^{3/2}\Big[-c\partial_Y n_{\pm}^{(m)} + \varepsilon \partial_T n_{\pm}^{(m)} + \partial_Y \left((N_{\pm}^{(m-1)} - 1) v_{\pm}^{(m)} + V_{\pm}^{(m-1)} n_{\pm}^{(m)} + \varepsilon^m n_{\pm}^{(m)}v_{\pm}^{(m)} \right) \\
    & \phantom{= -\varepsilon^{3/2}\Big[} + \partial_Y v_{\pm}^{(m)} \Big].
\end{align*}
Replacing the last \(v_{\pm}^{(m)}\) with its form from \eqref{eq::5_rely_on_n+k} and \eqref{eq::5_rely_on_n+k_spec}, we have
\begin{align*}
    \text{Res}_{n_\pm}^{(m)} & = \varepsilon^{-1} \text{Res}_{n_\pm}^{(m-1)} \\
    & \phantom{=} -\varepsilon^{3/2}\Bigg[-c\partial_Y n_{\pm}^{(m)} + \varepsilon \partial_T n_{\pm}^{(m)} + \partial_Y \left((N_{\pm}^{(m-1)} - 1) v_{\pm}^{(m)} + V_{\pm}^{(m-1)} n_{\pm}^{(m)} + \varepsilon^m n_{\pm}^{(m)}v_{\pm}^{(m)} \right) \\
    & \phantom{= -\varepsilon^{3/2}\Big[} + c \partial_Y n_{\pm}^{(m)} - \partial_{T}n_{\pm}^{(m-1)} - \partial_Y \left(\sum_{k=1}^{m-1} n_{\pm}^{(k)} v_{\pm}^{(m-k)}\right) \Bigg].
\end{align*}
Inserting the induction hypothesis yields
\begin{align*}
    \text{Res}_{n_\pm}^{(m)} & = -\varepsilon^{3/2} \Bigg[\partial_T n_\pm^{(m-1)} + \sum_{j=1}^{m-1} \sum_{k=j}^{m-1} \varepsilon^{j-1} \partial_Y\left(n_{\pm}^{(k)} v_{\pm}^{(m+j-k-1)}\right)\Bigg]  \\
    & \quad -\varepsilon^{3/2}\Bigg[\varepsilon \partial_T n_{\pm}^{(m)} + \partial_Y \left((N_{\pm}^{(m-1)} - 1) v_{\pm}^{(m)} + V_{\pm}^{(m-1)} n_{\pm}^{(m)} + \varepsilon^m n_{\pm}^{(m)}v_{\pm}^{(m)} \right) \\
    & \phantom{= -\varepsilon^{3/2}\Bigg[} - \partial_{T}n_{\pm}^{(m-1)} - \partial_Y\left(\sum_{k=1}^{m-1} n_{\pm}^{(k)} v_{\pm}^{(m-k)}\right) \Bigg].
\end{align*}
Now, we simplify and expand all variables to reveal
\begin{align*}
    \text{Res}_{n_\pm}^{(m)} & = -\varepsilon^{3/2} \Bigg[\sum_{j=2}^{m-1} \sum_{k=j}^{m-1} \varepsilon^{j-1} \partial_Y\left(n_{\pm}^{(k)} v_{\pm}^{(m+j-k-1)}\right) \\
    & \phantom{= -\varepsilon^{3/2}\Big[} + \varepsilon \partial_T n_{\pm}^{(m)} + \partial_Y \left(\sum_{k=1}^{m-1} \varepsilon^k n_\pm^{(k)}v_{\pm}^{(m)} + \sum_{k=1}^{m-1} \varepsilon^k v_\pm^{(k)}n_{\pm}^{(m)} + \varepsilon^m n_{\pm}^{(m)}v_{\pm}^{(m)}  \right) \Bigg].
\end{align*}
Finally, we can factor out an \(\varepsilon\) and combine the summations, resulting in 
\begin{align*}
    \text{Res}_{n_\pm}^{(m)} & = -\varepsilon^{5/2} \Bigg[\partial_T n_\pm^{(m)} + \sum_{j=1}^{m} \sum_{k=j}^{m} \varepsilon^{j-1} \partial_Y\left(n_{\pm}^{(k)} v_{\pm}^{(m+1+j-k-1)}\right)\Bigg].
\end{align*}
For a more in-depth calculation that equates the last two steps, consider
\begin{align*}
    & \sum_{j=1}^{m} \sum_{k=j}^{m} \varepsilon^{j-1} n_{\pm}^{(k)} v_{\pm}^{(m+1+j-k-1)} \\[2pt] 
    & = \sum_{j=1}^{m-1} \sum_{k=j}^{m} \varepsilon^{j-1} n_{\pm}^{(k)} v_{\pm}^{(m+j-k)} + \varepsilon^{m-1} n_{\pm}^{(m)} v_{\pm}^{(m)} \\[2pt] 
    & = \sum_{j=1}^{m-1} \sum_{k=j}^{m-1} \varepsilon^{j-1} n_{\pm}^{(k)} v_{\pm}^{(m+j-k)} + \sum_{j=1}^{m-1} \varepsilon^{j-1} n_{\pm}^{(m)} v_{\pm}^{(j)} + \varepsilon^{m-1} n_{\pm}^{(m)} v_{\pm}^{(m)} \\[2pt] 
    & = \sum_{j=1}^{m-1} \sum_{k=j+1}^{m-1} \varepsilon^{j-1} n_{\pm}^{(k)} v_{\pm}^{(m+j-k)} + \sum_{j=1}^{m-1} \varepsilon^{j-1} n_{\pm}^{(j)} v_{\pm}^{(m)} + \sum_{j=1}^{m-1} \varepsilon^{j-1} n_{\pm}^{(m)} v_{\pm}^{(j)} + \varepsilon^{m-1} n_{\pm}^{(m)} v_{\pm}^{(m)}  \\[2pt] 
    & = \sum_{j=2}^{m-1} \sum_{k=j}^{m-1} \varepsilon^{j-1} n_{\pm}^{(k)} v_{\pm}^{(m+j-k-1)} + \sum_{k=1}^{m-1} \varepsilon^{k-1}\left( n_\pm^{(k)}v_{\pm}^{(m)} + v_\pm^{(k)}n_{\pm}^{(m)}\right) + \varepsilon^{m-1} n_{\pm}^{(m)}v_{\pm}^{(m)}.
\end{align*}
\end{proof}

\begin{lemma}\label{lem::res_vel_ion}
    We can rewrite the ion velocity error-scaled residual terms as 
    \begin{align*}
        & \text{Res}_{v_+}^{(m)} = -\varepsilon^{5/2}\left[\rule{0pt}{28pt}\right. \partial_T v_+^{(m)} + \tfrac{1}{2}\partial_Y\left(\sum_{k=1}^{m} \sum_{j=k}^{m} \varepsilon^{k-1} v_{+}^{(j)} v_{+}^{(m+k-j)} \right) \\[8pt] 
        & + \tau_i \partial_Y\left( \sum_{k=2}^{m}\frac{(-1)^{k+1}\varepsilon^{k-m-1}}{k}  \left(\sum_{\substack{S_{m-1} = k \\ P_{m-1} \geq m+1-k}} \binom{k}{a_1,a_2,\ldots,a_{m-1}} \prod_{i=1}^{m-1} \left(\varepsilon^{i-1}n_+^{(i)} \right)^{a_i}\right) \right) \\[8pt] 
        & + \tau_i\partial_Y \left(\sum_{k=2}^{m} \frac{(-1)^{k+1}}{k} \left(\sum_{j=1}^{k} \binom{k}{j} \varepsilon^{m(j-1) + k - j - 1} \left(\sum_{i=1}^{m-1} \varepsilon^{i-1} n_+^{(i)} \right)^{k-j} \left( n_{+}^{(m)}\right)^{j}\right)\right) \left.\rule{0pt}{28pt}\right],
    \end{align*}
    where \(0 \leq a_i \in\mathbb{Z}\) for all \(1\leq i \leq m-1\) and
    \begin{equation*}
    S_{m-1} = \sum_{i=1}^{m-1} a_i, \quad P_{m-1} = \sum_{i=1}^{m-1} (i-1) a_i.
    \end{equation*}
\end{lemma}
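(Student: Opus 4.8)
The plan is to reproduce the organization of Lemma~\ref{lem::res_dens}---apply the Gardner--Morikawa transformation and peel off the $m$-th order correction---but the decisive step is to substitute the fourth relation of $(\mathscr{S}_k)$ at the outset, before attempting any index bookkeeping. First I would pull out the common prefactor and use $\partial_x = \varepsilon^{1/2}\partial_Y$, $\partial_t = \varepsilon^{3/2}\partial_T - c\varepsilon^{1/2}\partial_Y$ to write
\[
\text{Res}_{v_+}^{(m)} = -\varepsilon^{3/2-m}\Big[\varepsilon\partial_T V_+^{(m)} - c\partial_Y V_+^{(m)} + \tfrac{1}{2}\partial_Y (V_+^{(m)})^2 + \tau_i\sum_{k=1}^{m}\tfrac{(-1)^{k+1}}{k}\partial_Y\big((N_+^{(m)}-1)^k\big) + \partial_Y\Phi^{(m)}\Big].
\]
Since $\Phi^{(m)} = \sum_{k=1}^{m}\varepsilon^k\phi^{(k)}$, I would replace each $\phi^{(k)}$ using \eqref{eq::5_rely_on_n+k}, namely $\phi^{(k)} = c v_+^{(k)} - \tau_i \ell_+^{(k)} + j^{(k-1)}$. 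The $c v_+^{(k)}$ contributions then cancel the $-c\partial_Y V_+^{(m)}$ term order by order---exactly the role played by $v_\pm^{(m)} = c n_\pm^{(m)} + \dots$ in the density lemma---leaving only the $\ell_+^{(k)}$ and $j^{(k-1)}$ pieces to reckon with.

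Next I would resolve the $j^{(k-1)}$ terms through \eqref{eq::5_rely_on_n+k_spec}, under which $\partial_Y j^{(k-1)} = -\partial_T v_+^{(k-1)} - \tfrac12\partial_Y\sum_{j=1}^{k-1} v_+^{(j)}v_+^{(k-j)}$. Summing $\varepsilon^k\partial_Y j^{(k-1)}$ against $\varepsilon\partial_T V_+^{(m)}$ telescopes (using $v_+^{(0)}=0$) to the single term $\varepsilon^{m+1}\partial_T v_+^{(m)}$, which the prefactor turns into the leading $-\varepsilon^{5/2}\partial_T v_+^{(m)}$. Likewise, subtracting the quadratic sums from $\tfrac12\partial_Y(V_+^{(m)})^2$ retains only the pairs $(j,\ell)$ with $j+\ell>m$; reindexing by $k=j+\ell-m$ and noting $j\geq k$ reproduces exactly the double sum $\tfrac12\partial_Y\big(\sum_{k=1}^{m}\sum_{j=k}^{m}\varepsilon^{k-1}v_+^{(j)}v_+^{(m+k-j)}\big)$ of the statement, with the $\varepsilon$-powers matching since $\varepsilon^{3/2-m}\varepsilon^{j+\ell} = \varepsilon^{5/2}\varepsilon^{k-1}$.

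What remains is the logarithmic part, $\tau_i\sum_k \tfrac{(-1)^{k+1}}{k}\partial_Y\big((N_+^{(m)}-1)^k\big) - \tau_i\sum_k \varepsilon^k\partial_Y\ell_+^{(k)}$. Here I would use that $\ell_+^{(k)}$ is precisely the $\varepsilon^k$-coefficient of $\ln(1+\sum_i\varepsilon^i n_+^{(i)})$ by \eqref{eq::ell_def}, so the second sum is the $\varepsilon$-series truncation of the logarithm while the first is its truncation in powers of $N_+^{(m)}-1$; their difference is the genuine higher-order remainder. To expose the two surviving sums I would split $(N_+^{(m)}-1)^k = \sum_{j=0}^{k}\binom{k}{j}(N_+^{(m-1)}-1)^{k-j}(\varepsilon^m n_+^{(m)})^j$: the $j\geq 1$ cross terms assemble the final $\binom{k}{j}$ sum carrying $(n_+^{(m)})^j$, while the $j=0$ terms expand multinomially into $\prod_i(\varepsilon^{i-1}n_+^{(i)})^{a_i}$ with $\varepsilon$-exponent $P_{m-1}+k$, so that matching against $\sum_k\varepsilon^k\ell_+^{(k)}$ is designed to remove exactly the multi-indices with $P_{m-1} < m+1-k$ (the formally matched coefficients) and leave the threshold $P_{m-1}\geq m+1-k$.

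I expect the combinatorial bookkeeping of this logarithmic remainder to be the main obstacle. In Lemma~\ref{lem::res_dens} the nonlinearity was the single bilinear product $N_+V_+$, so all index shifts were linear; here each power contributes a full multinomial sum whose weighted degree $\sum_i i\,a_i = P_{m-1}+S_{m-1}$ must be tracked simultaneously against both the $\varepsilon$-truncation of $\ell_+^{(k)}$ and the binomial separation of the $m$-th order correction. The crux is verifying that the threshold $P_{m-1}\geq m+1-k$ emerges \emph{uniformly} in $k$ after all cancellations---equivalently, that no sub-threshold multinomial term escapes the matching with $\ell_+^{(k)}$---which I would secure by an explicit index-accounting identity analogous to the closing display of Lemma~\ref{lem::res_dens}.
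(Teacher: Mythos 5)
Your proposal is correct, but it is a genuinely different argument from the one in the paper. The paper proves this lemma by induction on \(m\): it writes \(\text{Res}_{v_+}^{(m)} = \varepsilon^{-1}\text{Res}_{v_+}^{(m-1)} - \varepsilon^{3/2}[\zeta_1+\zeta_2+\zeta_3]\), invokes a (quite elaborate) induction hypothesis, and then spends several pages of multinomial bookkeeping (\(\alpha\), \(\beta_1,\dots,\beta_4\), \(\gamma\), \(\delta_1,\delta_2,\delta_3\)) showing that the negative powers of \(\varepsilon\) cancel and the surviving sums re-assemble into the stated form. You instead work directly at level \(m\): substituting \(\phi^{(k)} = cv_+^{(k)} - \tau_i\ell_+^{(k)} + j^{(k-1)}\) from \eqref{eq::5_rely_on_n+k} into \(\partial_Y\Phi^{(m)}\) kills \(-c\partial_Y V_+^{(m)}\) outright, the \(j^{(k-1)}\) formulas from \eqref{eq::5_rely_on_n+k_spec} telescope against \(\varepsilon\partial_T V_+^{(m)}\) and \(\tfrac12\partial_Y(V_+^{(m)})^2\) to leave exactly \(\varepsilon^{m+1}\partial_T v_+^{(m)}\) and the super-diagonal pairs \(j+\ell\geq m+1\) (your reindexing \(k=j+\ell-m\) and the exponent check \(\varepsilon^{3/2-m}\varepsilon^{j+\ell}=\varepsilon^{5/2}\varepsilon^{k-1}\) are both right), and the logarithmic remainder is identified as the difference of two truncations of \(\ln(1+\sum_i\varepsilon^i n_+^{(i)})\), with the threshold \(P_{m-1}\geq m+1-k\) being precisely the condition \(\sum_i i\,a_i\geq m+1\) since \(P_{m-1}=\sum_i i\,a_i - S_{m-1}\). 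That last observation is the key point you flag as the "main obstacle," and it in fact resolves cleanly: every multinomial term of total \(\varepsilon\)-degree \(\leq m\) in \(\sum_{k\leq m}\tfrac{(-1)^{k+1}}{k}(N_+^{(m)}-1)^k\) appears with the identical coefficient in \(\sum_{k\leq m}\varepsilon^k\ell_+^{(k)}\) (a degree-\(j\) multinomial term has \(\varepsilon\)-degree \(\geq j\), so nothing of degree \(\leq m\) is lost by truncating the \(\ln\) series at \(k=m\)), and the binomial split in \(\varepsilon^m n_+^{(m)}\) separates the surviving terms into the paper's second and third sums with matching \(\varepsilon\)-exponents \(\varepsilon^{k-m-1+P_{m-1}}\) and \(\varepsilon^{m(j-1)+k-j-1}\). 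What your route buys is the elimination of the induction and of the delicate \(\delta_1 = -(\delta_2\text{-piece}+\delta_3)\) cancellation; what the paper's route buys is that it never needs the global "two truncations of the logarithm" identity, only step-by-step extraction of the lowest-order multinomial terms. Both are valid; yours is shorter once the truncation-matching observation is stated as a lemma.
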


\begin{proof}
For the base case, we know
\begin{equation*}
\text{Res}_{v_+}^{(1)} = - \Big[\partial_t V_{+}^{(1)} + \tfrac{1}{2}\partial_x (V_{+}^{(1)})^2 + \tau_i\sum_{k=1}^{1} \frac{(-1)^{k+1}}{k}\partial_x\left((N_{+}^{(1)} - 1)^k\right) + \partial_x \Phi^{(1)}\Big].
\end{equation*}
By simplifying and applying the Gardner-Morikawa transformation, it becomes,
\begin{equation*}
\text{Res}_{v_+}^{(1)} = -\varepsilon^{1/2}\Big[-c\partial_Y V_{+}^{(1)} + \varepsilon\partial_T V_{+}^{(1)} + \tfrac{1}{2}\partial_Y (V_{+}^{(1)})^2 + \tau_i \partial_Y N_{+}^{(1)} + \partial_Y \Phi^{(1)}\Big].
\end{equation*}
Expanding the variables, we get 
\begin{equation*}
\text{Res}_{v_+}^{(1)} = -\varepsilon^{3/2}\Big[-c\partial_Y v_{+}^{(1)} + \varepsilon\partial_T v_{+}^{(1)} + \varepsilon v_{+}^{(1)}\partial_Y v_{+}^{(1)} + \tau_i \partial_Y n_{+}^{(1)} + \partial_Y \phi^{(1)}\Big].
\end{equation*}
From \eqref{eq::rely_on_n+1}, we can replace \(\phi^{(1)}\) with its relationship to \(n_+^{(1)}\) and obtain 
\begin{equation*}
\text{Res}_{v_+}^{(1)} = -\varepsilon^{3/2}\Big[-c\partial_Y v_{+}^{(1)} + \varepsilon\partial_T v_{+}^{(1)} + \varepsilon v_{+}^{(1)}\partial_Y v_{+}^{(1)} + \tau_i\partial_Y n_{+}^{(1)} + c \partial_Y v_{+}^{(1)} - \tau_i \partial_Y n_{+}^{(1)}\Big].
\end{equation*}
Ergo, 
\begin{equation*}
\text{Res}_{v_+}^{(1)} = -\varepsilon^{5/2}\Big[\partial_T v_{+}^{(1)} +  v_{+}^{(1)}\partial_Y v_{+}^{(1)} \Big].
\end{equation*}

Coming out of left field, we now make the most insane induction hypothesis,
\begin{align*}
    & \text{Res}_{v_+}^{(m-1)} = -\varepsilon^{5/2}\left[\rule{0pt}{28pt}\right. \partial_T v_+^{(m-1)} + \tfrac{1}{2}\partial_Y\left(\sum_{k=1}^{m-1} \sum_{j=k}^{m-1} \varepsilon^{k-1} v_{+}^{(j)} v_{+}^{(m-1+k-j)} \right) \\
    & + \tau_i \partial_Y\left( \sum_{k=2}^{m-1}\frac{(-1)^{k+1}}{k}  \varepsilon^{k-m}\left(\sum_{\substack{S_{m-2} = k \\ P_{m-2} \geq m-k}} \binom{k}{a_1,a_2,\ldots,a_{m-2}} \prod_{i=1}^{m-2} \left(\varepsilon^{i-1}n_+^{(i)} \right)^{a_i}\right) \right) \\
    & + \tau_i\partial_Y \left(\sum_{k=2}^{m-1} \frac{(-1)^{k+1}}{k} \left(\sum_{j=1}^{k} \binom{k}{j} \varepsilon^{(m-1)(j-1) + k - j - 1} \left(\sum_{i=1}^{m-2} \varepsilon^{i-1} n_+^{(i)} \right)^{k-j} \left( n_{+}^{(m-1)}\right)^{j}\right)\right) \left.\rule{0pt}{28pt}\right],
\end{align*}
where \(0 \leq a_i\in\mathbb{Z}\) for all \(1\leq i \leq m-2\) and \[S_{m-2} = \sum_{i=1}^{m-2} a_i, \quad P_{m-2} = \sum_{i=1}^{m-2} (i-1) a_i.\]
\vspace{-8pt}
The \(m^\text{th}\) step of the ion velocity residual has the form
\begin{align*}
\text{Res}_{v_+}^{(m)} = -\varepsilon^{1-m}\Bigg[\partial_t V_{+}^{(m)} + V_{+}^{(m)} \partial_x V_{+}^{(m)} + \tau_i\partial_x \left(\sum_{k=1}^{m} \frac{(-1)^{k+1}}{k} (N_{+}^{(m)} - 1)^{k}\right) + \partial_x \Phi^{(m)}\Bigg].
\end{align*}
After applying the Gardner-Morikawa transformation, it becomes 
\begin{align*}
\text{Res}_{v_+}^{(m)} = -\varepsilon^{3/2-m}\Bigg[& -c\partial_Y V_{+}^{(m)} + \varepsilon \partial_T V_{+}^{(m)} + V_{+}^{(m)} \partial_Y V_{+}^{(m)} \\
& + \tau_i\partial_Y \left(\sum_{k=1}^{m} \frac{(-1)^{k+1}}{k} (N_{+}^{(m)} - 1)^{k}\right) + \partial_Y \Phi^{(m)}\Bigg].
\end{align*}
Detaching the lower order residual terms from its final term, e.g. \(V_+^{(m)} = V_{+}^{(m-1)} + \varepsilon^m v_+^{(m)}\), we have
\begin{align*}
\text{Res}_{v_+}^{(m)} & = \varepsilon^{-1} \text{Res}_{v_+}^{(m-1)} - \varepsilon^{3/2}\Big[\zeta_1 + \zeta_2 + \zeta_3 \Big], \\[8pt]
\zeta_1 & \coloneqq - c\partial_Y v_{+}^{(m)} + \varepsilon \partial_T v_{+}^{(m)} + \partial_Y\left(V_{+}^{(m-1)}  v_{+}^{(m)}\right) + \varepsilon^{m} v_{+}^{(m)}\partial_Y v_{+}^{(m)} + \partial_Y \phi_+^{(m)}, \\[8pt]
\zeta_2 & \coloneqq \tau_i (-1)^{m+1} \left(\sum_{k=1}^{m-1}\varepsilon^{k-1} n_+^{(k)}\right)^{m-1}\partial_Y \left(\sum_{k=1}^{m-1}\varepsilon^{k-1} n_+^{(k)}\right), \\[8pt]
\zeta_3 & \coloneqq \tau_i \partial_Y \left(\sum_{k=1}^{m} \frac{(-1)^{k+1}}{k} \sum_{j=1}^{k} \binom{k}{j} (N_{+}^{(m-1)} - 1)^{k-j} \varepsilon^{m(j-1)}\left( n_{+}^{(m)} \right)^j \right).
\end{align*}

These pieces can be simplified. We will begin with \(\zeta_1\) and replace \(\phi_+^{(m)}\) with its definition from \eqref{eq::5_rely_on_n+k} and \eqref{eq::5_rely_on_n+k_spec}. This yields
\begin{align*}
    \zeta_1 = & \; \varepsilon \partial_T v_{+}^{(m)} + \partial_Y\left(V_{+}^{(m-1)}  v_{+}^{(m)}\right) + \varepsilon^{m} v_{+}^{(m)}\partial_Y v_{+}^{(m)} \\[2pt]
    & - \partial_T v_{+}^{(m-1)} - \tfrac{1}{2}\partial_Y \left(\sum_{j=1}^{m-1} v_{+}^{(j)} v_{+}^{(m-j)} \right) - \tau_i \partial_Y \ell_+^{(m)}.
\end{align*}
We can expand the terms and exchange \(\ell_+^{(m)}\) with its \eqref{eq::ell_def} equivalent,
\begin{align*}
    \zeta_1 = \; & \varepsilon \partial_T v_{+}^{(m)} + \varepsilon \partial_Y \left( v_{+}^{(m)} \sum_{j=1}^{m-1} \varepsilon^{j-1} v_{+}^{(j)} \right) + \varepsilon^{m} v_{+}^{(m)}\partial_Y v_{+}^{(m)} \\
    & - \partial_T v_{+}^{(m-1)} - \tfrac{1}{2}\partial_Y \left(\sum_{j=1}^{m-1} v_{+}^{(j)} v_{+}^{(m-j)} \right) \\
        & - \tau_i \partial_Y \left( \sum_{k=1}^{m}\frac{(-1)^{k+1}}{k}\left(\sum_{\substack{a_1 + a_2 + \cdots + a_{m} = k \\ a_1 + 2a_2 + \cdots + m a_{m} = m \\ a_1,a_2,\ldots,a_{m} \geq 0}} \binom{k}{a_1,a_2,\ldots,a_{m}} \prod_{i=1}^{m} \left(n_+^{(i)}\right)^{a_i}\right)\right).
\end{align*}
Extracting the \(a_1 = m\) and \(a_m = 1\) terms, we get
\begin{align*}
    \zeta_1 = \; & \varepsilon \partial_T v_{+}^{(m)} + \varepsilon \partial_Y \left( v_{+}^{(m)} \sum_{j=1}^{m-1} \varepsilon^{j-1} v_{+}^{(j)} \right) + \varepsilon^{m} v_{+}^{(m)}\partial_Y v_{+}^{(m)} \\ 
    &  - \partial_T v_{+}^{(m-1)} - \tfrac{1}{2}\partial_Y \left(\sum_{j=1}^{m-1} v_{+}^{(j)} v_{+}^{(m-j)} \right) - \tau_i \partial_Y\left( \frac{(-1)^{m+1}}{m}\left(n_+^{(1)}\right)^{m} + n_+^{(m)}\right) \\ 
    & - \tau_i \partial_Y \left( \sum_{k=1}^{m}\frac{(-1)^{k+1}}{k}\left(\sum_{\substack{a_1 + a_2 + \cdots + a_{m} = k \\ a_1 + 2a_2 + \cdots + m a_{m} = m \\ a_1,a_2,\ldots,a_{m} \geq 0 \\ a_1 \neq m \;\&\; a_m \neq 1}} \binom{k}{a_1,a_2,\ldots,a_{m}} \prod_{i=1}^{m} \left(n_+^{(i)}\right)^{a_i}\right)\right).
\end{align*}
This is how we will leave \(\zeta_1\). Next, we can rewrite \(\zeta_2\) and \(\zeta_3\) as
\begin{align*}
    \zeta_2 = \; & \tau_i (-1)^{m+1} \left(\sum_{k=1}^{m-1}\varepsilon^{k-1} n_+^{(k)}\right)^{m-1}\partial_Y n_+^{(1)} \\
    & + \tau_i (-1)^{m+1} \left(\sum_{k=1}^{m-1}\varepsilon^{k-1} n_+^{(k)}\right)^{m-1}\partial_Y \left(\sum_{k=2}^{m-1}\varepsilon^{k-1} n_+^{(k)}\right), \\[8pt]
    \zeta_3 = \; & \tau_i \partial_Y  n_{+}^{(m)} + \tau_i \partial_Y \left(\sum_{k=2}^{m} \frac{(-1)^{k+1}}{k} \sum_{j=1}^{k} \binom{k}{j} (N_{+}^{(m-1)} - 1)^{k-j} \varepsilon^{m(j-1)}\left( n_{+}^{(m)} \right)^j \right).
\end{align*}
Utilizing the Multinomial Theorem, \(\zeta_2\) expands to
\begin{align*}
    \zeta_2 = \; & \tau_i (-1)^{m+1} \left(\sum_{\substack{a_1 + a_2 + \cdots + a_{m-1} = m-1 \\ a_1,a_2,\ldots,a_{m-1} \geq 0}} \binom{m-1}{a_1,a_2,\ldots,a_{m-1}} \prod_{i=1}^{m-1} \left(\varepsilon^{i-1}n_+^{(i)}\right)^{a_i} \right)\partial_Y n_+^{(1)} \\
    &  + \tau_i (-1)^{m+1} \left(\sum_{k=1}^{m-1}\varepsilon^{k-1} n_+^{(k)}\right)^{m-1}\partial_Y \left(\sum_{k=2}^{m-1}\varepsilon^{k-1} n_+^{(k)}\right).
\end{align*}
Isolating the \(a_1 = m-1\) term from \(\zeta_2\),
\begin{align*}
    \zeta_2 = \; & \tau_i (-1)^{m+1} \left(\rule{0pt}{24pt}\right. \sum_{\substack{a_1 + a_2 + \cdots + a_{m-1} = m-1 \\ a_1,a_2,\ldots,a_{m-1} \geq 0 \\ a_1 \neq m-1}} \binom{m-1}{a_1,a_2,\ldots,a_{m-1}} \prod_{i=1}^{m-1} \left(\varepsilon^{i-1}n_+^{(i)}\right)^{a_i} \left.\rule{0pt}{24pt}\right)\partial_Y n_+^{(1)} \\
    & + \tau_i (-1)^{m+1} \left(n_+^{(1)}\right)^{m-1} \partial_Y n_+^{(1)} + \tau_i (-1)^{m+1} \left(\sum_{k=1}^{m-1}\varepsilon^{k-1} n_+^{(k)}\right)^{m-1}\partial_Y \left(\sum_{k=2}^{m-1}\varepsilon^{k-1} n_+^{(k)}\right)
\end{align*}

Going back to ion velocity residual, recall 
\begin{equation*}
\text{Res}_{v_+}^{(m)} = \varepsilon^{-1} \text{Res}_{v_+}^{(m-1)} - \varepsilon^{3/2}\Big[\zeta_1 + \zeta_2 + \zeta_3 \Big].
\end{equation*} 
Our goal is to eliminate all terms of magnitude bigger than \(\varepsilon^{5/2}\). By replacement of \(\text{Res}_{v_+}^{(m-1)}\) with the induction hypothesis and our rewritings of \(\zeta_1\), \(\zeta_2\), \(\zeta_3\), is
\begin{align*}
\text{Res}_{v_+}^{(m)} & = -\varepsilon^{5/2}\Big[\partial_T v_{+}^{(m)} + \alpha + \beta_1 + \beta_2 + \beta_3 + \gamma \Big] -\varepsilon^{3/2}\Big[\delta_1 + \delta_2 + \delta_3 \Big], \\[8pt]
    \alpha & \coloneqq \tfrac{1}{2}\partial_Y\left(\sum_{k=2}^{m-1} \sum_{j=k}^{m-1} \varepsilon^{k-2} v_{+}^{(j)} v_{+}^{(m-1+k-j)} \right) + \partial_Y \left( v_{+}^{(m)} \sum_{j=1}^{m-1} \varepsilon^{j-1} v_{+}^{(j)} \right) + \varepsilon^{m-1} v_{+}^{(m)}\partial_Y v_{+}^{(m)}, \\[2pt]
    \beta_1 & \coloneqq \tau_i (-1)^{m+1} \left(\sum_{k=1}^{m-1}\varepsilon^{k-1} n_+^{(k)}\right)^{m-1}\partial_Y \left(\sum_{k=2}^{m-1}\varepsilon^{k-2} n_+^{(k)}\right), \\[2pt]
    \beta_2 & \coloneqq \tau_i (-1)^{m+1} \varepsilon^{-1}\left(\sum_{\substack{S_{m-1} = m-1 \\ a_1,a_2,\ldots,a_{m-1} \geq 0 \\ a_1 \neq m-1}} \binom{m-1}{a_1,a_2,\ldots,a_{m-1}} \prod_{i=1}^{m-1} \left(\varepsilon^{i-1}n_+^{(i)}\right)^{a_i} \right)\partial_Y n_+^{(1)}, \\[2pt]
    \beta_3 & \coloneqq \tau_i \partial_Y\left( \sum_{k=2}^{m-1}\frac{(-1)^{k+1}  \varepsilon^{k-m-1}}{k}\left(\sum_{\substack{S_{m-2} = k \\ P_{m-2} > m-k \\ a_1,a_2,\ldots,a_{m-2} \geq 0 }} \binom{k}{a_1,a_2,\ldots,a_{m-2}} \prod_{i=1}^{m-2} \left(\varepsilon^{i-1}n_+^{(i)} \right)^{a_i}\right) \right), \\[2pt]
    \gamma & \coloneqq \tau_i \partial_Y \left(\sum_{k=2}^{m} \frac{(-1)^{k+1}}{k} \sum_{j=1}^{k} \binom{k}{j} \varepsilon^{m(j-1) + k - j - 1} \left(\sum_{i=1}^{m-1} \varepsilon^{i-1} n_+^{(i)} \right)^{k-j}\left( n_{+}^{(m)} \right)^j \right), \\[2pt]
    \delta_1 & \coloneqq - \tau_i \partial_Y \left( \sum_{k=1}^{m}\frac{(-1)^{k+1}}{k}\left(\sum_{\substack{S_{m} = k \\ a_1 + 2a_2 + \cdots + m a_{m} = m \\ a_1,a_2,\ldots,a_{m} \geq 0 \\ a_1 \neq m \;\&\; a_m \neq 1}} \binom{k}{a_1,a_2,\ldots,a_{m}} \prod_{i=1}^{m} \left(n_+^{(i)}\right)^{a_i}\right)\right),\\[2pt]
    \delta_2 & \coloneqq \tau_i\partial_Y \left(\sum_{k=2}^{m-1} \frac{(-1)^{k+1}}{k} \left(\sum_{j=1}^{k} \binom{k}{j} \varepsilon^{(m-1)(j-1) + k - j - 1} \left(\sum_{i=1}^{m-2} \varepsilon^{i-1} n_+^{(i)} \right)^{k-j} \left( n_{+}^{(m-1)}\right)^{j}\right)\right), \\[2pt]
    \delta_3 & \coloneqq \tau_i \partial_Y\left( \sum_{k=2}^{m-1}\frac{(-1)^{k+1}}{k}  \varepsilon^{k-m}\left(\sum_{\substack{S_{m-2} = k \\ P_{m-2} = m-k \\ a_1,a_2,\ldots,a_{m-2} \geq 0 }} \binom{k}{a_1,a_2,\ldots,a_{m-2}} \prod_{i=1}^{m-2} \left(\varepsilon^{i-1}n_+^{(i)} \right)^{a_i}\right) \right).
\end{align*}
We are continuing to use the notation
\begin{equation*}
S_{val} = \sum_{i=1}^{val} a_i, \quad P_{val} = \sum_{i=1}^{val} (i-1) a_i.
\end{equation*}
It is important to note that \(\beta_2\) is order \(\varepsilon^0\) and higher. The only possible issue of a negative exponent seems like it would appear when \(i=1\) and \(a_1>0\) with \(a_i=0\) for \(2\leq i \leq m-1\), but that only occurs when \(a_1=m-1\), which is already eliminated.

The entirety of \(\delta_1\) can be cancelled out using \(\delta_2\) and \(\delta_3\). To see this, start by reworking \(\delta_3\) as
\begin{equation*}
\delta_3 = \tau_i \partial_Y\left( \sum_{k=2}^{m-1}\frac{(-1)^{k+1}}{k}  \left(\sum_{\substack{S_{m-2} = k \\ P_{m-2} = m-k \\ a_1,a_2,\ldots,a_{m-2} \geq 0 }} \binom{k}{a_1,a_2,\ldots,a_{m-2}} \prod_{i=1}^{m-2} \left(n_+^{(i)} \right)^{a_i}\right) \right),
\end{equation*} 
observing that it has no order of \(\varepsilon\). Next, we apply the Multinomial Theorem to \(\delta_2\),
\begin{align*}
    \delta_2 & = \tau_i\partial_Y \left(\sum_{k=2}^{m-1} \frac{(-1)^{k+1}}{k} \left(\sum_{j=1}^{k} \binom{k}{j} \varepsilon^{(m-1)(j-1) + k - j - 1} \delta_{2A} \left( n_{+}^{(m-1)}\right)^{j} \right)\right), \\[2pt]
    \delta_{2A} & \coloneqq \sum_{\substack{S_{m-2} = k-j \\ a_1,a_2,\ldots,a_{m-2} \geq 0 }}  \binom{k-j}{a_1,a_2,\ldots,a_{m-2}}  \prod_{i=1}^{m-2} \left(\varepsilon^{i-1} n_+^{(i)} \right)^{a_i}.
\end{align*}
We want to isolate the only piece of \(\delta_{2}\) with no \(\varepsilon\). This happens in \(\delta_{2A}\) when \(a_1=1\), \(a_i = 0\) for \(2\leq i \leq m-2\), \(j=1\), and \(k=2\). 
It is 
%
\(\tau_i\partial_Y \left( - n_+^{(1)} n_{+}^{(m-1)} \right)\). Since the remaining terms of \(\delta_2\) all have order \(\varepsilon\) or higher, 
\begin{equation*}
P_{m-2} = \sum_{i=1}^{m-2} a_i(i-1) > -(m-1)(j-1) - k + j + 1.
\end{equation*} 
Ergo,
\begin{align*}
    \delta_{2} =  \tau_i\partial_Y & \left( - n_+^{(1)} n_{+}^{(m-1)}\right) + \tau_i\partial_Y \left(\sum_{k=2}^{m-1} \frac{(-1)^{k+1}}{k} \left(\sum_{j=1}^{k} \binom{k}{j} \varepsilon^{(m-1)(j-1) + k - j - 1} \delta_{2B} \left( n_{+}^{(m-1)}\right)^{j} \right)\right), \\[8pt]
    \delta_{2B} & \coloneqq \sum_{\substack{S_{m-2} = k-j \\ Pa_{m-2} > -(m-1)(j-1) - k + j + 1 \\ a_1,a_2,\ldots,a_{m-2} \geq 0 }}  \binom{k-j}{a_1,a_2,\ldots,a_{m-2}}  \prod_{i=1}^{m-2} \left(\varepsilon^{i-1} n_+^{(i)} \right)^{a_i}.
\end{align*}
Extracting \(\tau_i\partial_Y \left( - n_+^{(1)} n_{+}^{(m-1)} \right)\) and summing it with \(\delta_3\), we get
\begin{align*}
    \tau_i\partial_Y \left( - n_+^{(1)} n_{+}^{(m-1)} \right) + \delta_3 = \; & \tau_i\partial_Y \left( - n_{+}^{(m-1)} n_+^{(1)} \right) + \tau_i \partial_Y\left( \sum_{k=2}^{m-1}\frac{(-1)^{k+1}}{k}\delta_{3A} \right), \\
    \delta_{3A} \coloneqq & \sum_{\substack{S_{m-2} = k \\ P_{m-2} = m-k \\ a_1,a_2,\ldots,a_{m-2} \geq 0 }} \binom{k}{a_1,a_2,\ldots,a_{m-2}} \prod_{i=1}^{m-2} \left(n_+^{(i)} \right)^{a_i}.
\end{align*}
From \(\delta_{3A}\), \(S_{m-2} = k\) and \(P_{m-2} = m-k\) implies
\(
    a_1 + 2a_2 + 3a_3 + \cdots + (m-2)a_{m-2} = m.
\)
It also tells us \(a_1 \neq m\) because \(k \leq m-1\). Furthermore, we can take \(a_{m-1} = 0\) and incorporate it into \(\delta_{3A}\), which markedly excludes the possibility of the coefficient combination of \(a_1 = 1\) with \(a_{m-1} = 1\). So, 
\vspace{-8pt}
\begin{equation*}
\delta_{3A} = \sum_{\substack{S_{m-1} = k \\ a_1 + 2a_2 + \cdots + (m-1) a_{m-1} = m \\ a_1,a_2,\ldots,a_{m-1} \geq 0 \\ a_1 \neq m \;,\; a_1\neq 1 \;\cup\; a_{m-1}\neq 1 }} \binom{k}{a_1,a_2,\ldots,a_{m-1}} \prod_{i=1}^{m-1} \left(n_+^{(i)}\right)^{a_i}.
\end{equation*}
That being said, the term
\begin{equation*}
\tau_i\partial_Y \left( - n_+^{(1)} n_{+}^{(m-1)} \right) = \tau_i \partial_Y \left(\frac{(-1)^{2+1}}{2}\binom{2}{1,0,\ldots,0,1} n_+^{(1)} n_+^{(m-1)} \right),
\end{equation*} 
does accommodate the coefficient combination of \(a_1 = 1\) with \(a_{m-1} = 1\) and has the same form as \(\delta_{3A}\). Therefore,
\begin{align*}
    & \tau_i\partial_Y \left( - n_+^{(1)} n_{+}^{(m-1)} \right) + \delta_3 \\
    & \; = \tau_i \partial_Y \left(\rule{0pt}{24pt}\right. \sum_{k=2}^{m-1}\frac{(-1)^{k+1}}{k}\left(\rule{0pt}{24pt}\right.\sum_{\substack{S_{m-1} = k \\ a_1 + 2a_2 + \cdots + (m-1) a_{m-1} = m \\ a_1,a_2,\ldots,a_{m-1} \geq 0 \\ a_1 \neq m}} \binom{k}{a_1,a_2,\ldots,a_{m-1}} \prod_{i=1}^{m-1} \left(n_+^{(i)}\right)^{a_i}\left.\rule{0pt}{24pt}\right) \left.\rule{0pt}{24pt}\right).
\end{align*}
We will further this equation by inputting \(a_m = 0\), i.e. \(a_m \neq 1\). This is
\begin{align*}
    & \tau_i\partial_Y \left( - n_+^{(1)} n_{+}^{(m-1)} \right) + \delta_3 \\
    &\;  = \tau_i \partial_Y \left(\rule{0pt}{24pt}\right.  \sum_{k=1}^{m}\frac{(-1)^{k+1}}{k}\left(\rule{0pt}{24pt}\right. \sum_{\substack{S_{m} = k \\ a_1 + 2a_2 + \cdots + m a_{m} = m \\ a_1,a_2,\ldots,a_{m} \geq 0 \\ a_1 \neq m \;\&\; a_m \neq 1}} \binom{k}{a_1,a_2,\ldots,a_{m}} \prod_{i=1}^{m} \left(n_+^{(i)}\right)^{a_i}\left.\rule{0pt}{24pt}\right)  \left.\rule{0pt}{24pt}\right) \\
    & \; = -\delta_1.
\end{align*}
Finally, we have removed all terms of magnitude greater than \(\varepsilon^{3/2}\).

What remains of the ion velocity error-scaled residual now is 
\begin{align*}
    \text{Res}_{v_+}^{(m)} = -\varepsilon^{5/2}\Big[\partial_T v_{+}^{(m)} + \alpha + \beta_1 + \beta_2 + \beta_3 + \beta_4 + \gamma \Big]
\end{align*}
where \(\alpha\), \(\beta_1\), \(\beta_2\), \(\beta_3\), and \(\gamma\) are defined as before, and
\begin{align*}
    \beta_4 & = \tau_i\partial_Y \left(\sum_{k=2}^{m-1} \frac{(-1)^{k+1}}{k} \left(\sum_{j=1}^{k} \binom{k}{j} \varepsilon^{(m-1)(j-1) + k - j - 2} \delta_{2B} \left( n_{+}^{(m-1)}\right)^{j} \right)\right),\\[2pt]
    \delta_{2B} & = \sum_{\substack{S_{m-2} = k-j \\ P_{m-2} > -(m-1)(j-1) - k + j + 1 \\ a_1,a_2,\ldots,a_{m-2} \geq 0 }}  \binom{k-j}{a_1,a_2,\ldots,a_{m-2}}  \prod_{i=1}^{m-2} \left(\varepsilon^{i-1} n_+^{(i)} \right)^{a_i}.
\end{align*}
We will continue to rewrite the residual until it is in the form of the induction hypothesis. Beginning with \(\alpha\),
\begin{align*}
    \alpha & = \tfrac{1}{2}\partial_Y\left(\sum_{k=2}^{m-1} \sum_{j=k}^{m-1} \varepsilon^{k-2} v_{+}^{(j)} v_{+}^{(m-1+k-j)} \right) + \partial_Y \left( v_{+}^{(m)} \sum_{j=1}^{m-1} \varepsilon^{j-1} v_{+}^{(j)} \right) + \varepsilon^{m-1} v_{+}^{(m)}\partial_Y v_{+}^{(m)} \\
    & = \tfrac{1}{2}\partial_Y\left(\sum_{k=2}^{m-1} \sum_{j=k}^{m-1} \varepsilon^{k-2} v_{+}^{(j)} v_{+}^{(m-1+k-j)} + \sum_{j=1}^{m-1} \varepsilon^{j-1} v_{+}^{(j)} v_{+}^{(m)} + \sum_{j=1}^{m-1} \varepsilon^{j-1} v_{+}^{(m)} v_{+}^{(j)} + \varepsilon^{m-1} v_{+}^{(m)}v_{+}^{(m)} \right) \\
    & = \tfrac{1}{2}\partial_Y\Bigg( \sum_{k=1}^{m-1} \sum_{j=k+1}^{m-1} \varepsilon^{k-1} v_{+}^{(j)} v_{+}^{(m+k-j)} + \sum_{k=1}^{m-1} \varepsilon^{k-1} v_{+}^{(k)} v_{+}^{(m+k-k)} \\
    & \phantom{\tfrac{1}{2}\partial_Y\Bigg(}\quad + \sum_{k=1}^{m-1} \varepsilon^{k-1} v_{+}^{(m)} v_{+}^{(m + k - m)} + \varepsilon^{m-1} v_{+}^{(m)} v_{+}^{(m)} \Bigg) \\
    & = \tfrac{1}{2}\partial_Y\left( \sum_{k=1}^{m-1} \sum_{j=k}^{m} \varepsilon^{k-1} v_{+}^{(j)} v_{+}^{(m+k-j)} + \varepsilon^{m-1} v_{+}^{(m)} v_{+}^{(m+m-m)} \right)\\[2pt]
    & = \tfrac{1}{2}\partial_Y\left(\sum_{k=1}^{m} \sum_{j=k}^{m} \varepsilon^{k-1} v_{+}^{(j)} v_{+}^{(m+k-j)} \right).
\end{align*}

Next, we will look at the sum \(\beta_1 + \beta_2 + \beta_3 + \beta_4\), starting with \(\beta_1\) and \(\beta_2\). Notice that
\begin{align*}
    \beta_1 + \beta_2 = \; & \tau_i (-1)^{m+1} \left(\sum_{k=1}^{m-1}\varepsilon^{k-1} n_+^{(k)}\right)^{m-1}\partial_Y \left(\sum_{k=2}^{m-1}\varepsilon^{k-2} n_+^{(k)}\right) \\
    & + \tau_i (-1)^{m+1} \varepsilon^{-1}\left(\rule{0pt}{24pt}\right. \sum_{\substack{S_{m-1} = m-1 \\ a_1,a_2,\ldots,a_{m-1} \geq 0 \\ a_1 \neq m-1}} \binom{m-1}{a_1,a_2,\ldots,a_{m-1}} \prod_{i=1}^{m-1} \left(\varepsilon^{i-1}n_+^{(i)}\right)^{a_i} \left.\rule{0pt}{24pt}\right) \partial_Y n_+^{(1)}. 
\end{align*}
Undoing the Multinomial Theorem, their sum becomes
\begin{align*}
    \beta_1 + \beta_2 = \; & \tau_i (-1)^{m+1}\varepsilon^{-1}\left(\sum_{k=1}^{m-1} \varepsilon^{k-1}n_+^{(k)} \right)^{m-1}\partial_Y\left(\sum_{k=2}^{m-1} \varepsilon^{k-1}  n_+^{(k)} \right) \\
    & + \tau_i (-1)^{m+1}\varepsilon^{-1}\left(\sum_{k=1}^{m-1} \varepsilon^{k-1}n_+^{(k)} \right)^{m-1}\partial_Y n_+^{(1)} - \tau_i \partial_Y\left(\frac{(-1)^{m+1}\varepsilon^{-1}}{m} \left(n_+^{(1)}\right)^m \right).
\end{align*}
Combining series and derivatives yields
\begin{align*}
    \beta_1 + \beta_2 & = \tau_i (-1)^{m+1}\varepsilon^{-1}\left(\sum_{k=1}^{m-1} \varepsilon^{k-1}n_+^{(k)} \right)^{m-1}\partial_Y\left(\sum_{k=1}^{m-1} \varepsilon^{k-1}  n_+^{(k)} \right) \\[2pt]
    & \phantom{=}\; - \tau_i \partial_Y\left(\frac{(-1)^{m+1}\varepsilon^{-1}}{m} \left(n_+^{(1)}\right)^m \right) \\[2pt]
    & = \tau_i \partial_Y\left(\frac{(-1)^{m+1}\varepsilon^{-1}}{m}\left(\left(\sum_{k=1}^{m-1} \varepsilon^{k-1}n_+^{(k)} \right)^m - \left(n_+^{(1)}\right)^m \right)\right).
\end{align*}
Reapplying the Multinomial Theorem, this is the same as
\begin{align*}
     \beta_1 + \beta_2 & = \tau_i \partial_Y\left(\rule{0pt}{40pt}\right.\frac{(-1)^{m+1}\varepsilon^{m-m-1}}{m} \left(\rule{0pt}{38pt}\right. \sum_{\substack{S_{m-1} = m \\ P_{m-1} \geq 1 \\ a_1 \neq m \\ a_1,a_2,\ldots,a_{m-1} \geq 0 }} \binom{m}{a_1,a_2,\ldots,a_{m-1}} \prod_{i=1}^{m-1} \left(\varepsilon^{i-1}n_+^{(i)} \right)^{a_i}\left.\rule{0pt}{38pt}\right)  \left.\rule{0pt}{40pt}\right). 
\end{align*}
Now, observe we can take \(a_{m-1}=0\) in \(\beta_3\) so that
\begin{align*}
    \beta_3 & = \tau_i \partial_Y \left(\rule{0pt}{40pt}\right.  \sum_{k=2}^{m-1}\frac{(-1)^{k+1}  \varepsilon^{k-m-1}}{k} \left(\rule{0pt}{38pt}\right. \sum_{\substack{S_{m-1} = k \\ P_{m-1} > m-k \\ a_1,a_2,\ldots,a_{m-2} \geq 0 \\ a_{m-1} = 0}} \binom{k}{a_1,a_2,\ldots,a_{m-1}} \prod_{i=1}^{m-1} \left(\varepsilon^{i-1}n_+^{(i)} \right)^{a_i} \left.\rule{0pt}{38pt}\right)  \left.\rule{0pt}{40pt}\right) .
\end{align*}
Moving along, we will look at \(\beta_4\),
\begin{align*}
    \beta_4 & = \tau_i\partial_Y \left(\sum_{k=2}^{m-1} \frac{(-1)^{k+1}}{k} \left(\sum_{j=1}^{k} \binom{k}{j} \varepsilon^{(m-1)(j-1) + k - j - 2} \delta_{2B} \left( n_{+}^{(m-1)}\right)^{j} \right)\right), \\
    \delta_{2B} & = \sum_{\substack{S_{m-2} = k-j \\ P_{m-2} > -(m-1)(j-1) - k + j + 1 \\ a_1,a_2,\ldots,a_{m-2} \geq 0 }}  \binom{k-j}{a_1,a_2,\ldots,a_{m-2}}  \prod_{i=1}^{m-2} \left(\varepsilon^{i-1} n_+^{(i)} \right)^{a_i}.
\end{align*}
For \(\beta_4\), consider \(a_{m-1}=j\) for some \(1 \leq j \leq k \). We will establish the values of various quantities in \(\beta_4\). First,
\begin{align*}
    \binom{k}{j}\binom{k-j}{a_1,a_2,\ldots,a_{m-2}} & = \frac{k!}{j!(k-j)!}\binom{k-j}{a_1,a_2,\ldots,a_{m-2}} \\[8pt]
    & = \binom{k}{a_1,a_2,\ldots,a_{m-2},j} \\[8pt]
    & = \binom{k}{a_1,a_2,\ldots,a_{m-2},a_{m-1}}.
\end{align*}
For the power of \(\varepsilon\),
\begin{align*}
    (m-1)(j-1) + k - j - 2 & = k -(m-1)-2+j(m-1) + j - 2j \\
    & = k-m-1+j(m-2) \\
    & = k-m-1 + a_{m-1}(m-2).  
\end{align*}
This also makes 
\begin{align*}
    a_2 + 2a_3 + \cdots + (m-3)a_{m-2} & > -(m-1)(j-1) - k + j + 1 \\
    \iff a_2 + 2a_3 + \cdots + (m-3)a_{m-2} + (m-2)a_{m-1} & > -(m-1)(j-1) - k + j + 1 + (m-2)j \\
    \iff a_2 + 2a_3 + \cdots + (m-3)a_{m-2} + (m-2)a_{m-1} & > m - k \\
    \iff a_2 + 2a_3 + \cdots + (m-3)a_{m-2} + (m-2)a_{m-1} & \geq m+1 - k,
\end{align*}
and
\begin{align*}
    a_1 + \cdots + a_{m-2} & = k-j \\
    \iff a_1 + \cdots + a_{m-2} + a_{m-1} & = k-j + j \\
    \iff a_1 + \cdots + a_{m-2} + a_{m-1} & = k.
\end{align*}
Thus, letting 
\begin{equation*}
    \beta_{4A} = (\varepsilon^{m-2}n_+^{(m-1)})^{a_{m-1}}\prod_{i=1}^{m-2} \left(\varepsilon^{i-1}n_+^{(i)} \right)^{a_i},
\end{equation*}
we have
\begin{align*}
    \beta_4 & = \tau_i\partial_Y\left(\rule{0pt}{40pt}\right.\sum_{k=2}^{m-1}\frac{(-1)^{k+1}  \varepsilon^{k-m-1}}{k}\left(\rule{0pt}{38pt}\right.\sum_{\substack{S_{m-1} = k \\ P_{m-1} \geq m+1-k \\ a_1,a_2,\ldots,a_{m-1} \geq 0 \\ a_{m-1} \neq 0}} \binom{k}{a_1,a_2,\ldots,a_{m-1}} \beta_{4A} \left.\rule{0pt}{38pt}\right) \left.\rule{0pt}{40pt}\right).
\end{align*}

Summing \(\beta_1\), \(\beta_2\), \(\beta_3\), \(\beta_4\) up,
\begin{align*}
    & \beta_1 + \beta_2 + \beta_3 + \beta_4 \\
    & = \tau_i \partial_Y \left(\rule{0pt}{40pt}\right. \frac{(-1)^{m+1}\varepsilon^{m-m-1}}{m} \left(\rule{0pt}{38pt}\right. \sum_{\substack{S_{m-1} = m \\ P_{m-1} \geq 1 \\ a_1 \neq m \\ a_1,a_2,\ldots,a_{m-1} \geq 0 }} \binom{m}{a_1,a_2,\ldots,a_{m-1}} \prod_{i=1}^{m-1} \left(\varepsilon^{i-1}n_+^{(i)} \right)^{a_i}\left.\rule{0pt}{38pt}\right) \left.\rule{0pt}{40pt}\right) \\[4pt]
     & \quad + \tau_i \partial_Y\left(\rule{0pt}{40pt}\right. \sum_{k=2}^{m-1}\frac{(-1)^{k+1}  \varepsilon^{k-m-1}}{k}\left(\rule{0pt}{38pt}\right. \sum_{\substack{S_{m-1} = k \\ P_{m-1} > m-k \\ a_1,a_2,\ldots,a_{m-2} \geq 0 \\ a_{m-1} = 0}} \binom{k}{a_1,a_2,\ldots,a_{m-1}} \prod_{i=1}^{m-1} \left(\varepsilon^{i-1}n_+^{(i)} \right)^{a_i} \left.\rule{0pt}{38pt}\right) \left.\rule{0pt}{40pt}\right) \\[4pt]
     & \quad + \tau_i\partial_Y\left(\rule{0pt}{40pt}\right. \sum_{k=2}^{m-1}\frac{(-1)^{k+1}  \varepsilon^{k-m-1}}{k}\left(\rule{0pt}{38pt}\right.\sum_{\substack{S_{m-1} = k \\ P_{m-1} \geq m+1-k \\ a_1,a_2,\ldots,a_{m-1} \geq 0 \\ a_{m-1} \neq 0}} \binom{k}{a_1,a_2,\ldots,a_{m-1}} \prod_{i=1}^{m-1} \left(\varepsilon^{i-1}n_+^{(i)} \right)^{a_i} \left.\rule{0pt}{38pt}\right) \left.\rule{0pt}{40pt}\right) \\[8pt]
    & = \tau_i \partial_Y \left(\rule{0pt}{40pt}\right. \sum_{k=2}^{m}\frac{(-1)^{k+1}  \varepsilon^{k-m-1}}{k}\left(\rule{0pt}{38pt}\right.\sum_{\substack{S_{m-1} = k \\ P_{m-1} \geq m+1-k \\ a_1,a_2,\ldots,a_{m-1} \geq 0 }} \binom{k}{a_1,a_2,\ldots,a_{m-1}} \prod_{i=1}^{m-1} \left(\varepsilon^{i-1}n_+^{(i)} \right)^{a_i}\left.\rule{0pt}{38pt}\right) \left.\rule{0pt}{40pt}\right) .
\end{align*}
Therefore,
    \begin{align*}
        & \text{Res}_{v_+}^{(m)} = -\varepsilon^{5/2}\left[\rule{0pt}{28pt}\right. \partial_T v_+^{(m)} + \tfrac{1}{2}\partial_Y\left(\sum_{k=1}^{m} \sum_{j=k}^{m} \varepsilon^{k-1} v_{+}^{(j)} v_{+}^{(m+k-j)} \right) \\[10pt] 
        & + \tau_i \partial_Y\left( \sum_{k=2}^{m}\frac{(-1)^{k+1}\varepsilon^{k-m-1}}{k}  \left(\sum_{\substack{S_{m-1} = k \\ P_{m-1} \geq m+1-k}} \binom{k}{a_1,a_2,\ldots,a_{m-1}} \prod_{i=1}^{m-1} \left(\varepsilon^{i-1}n_+^{(i)} \right)^{a_i}\right) \right) \\[10pt] 
        & + \tau_i\partial_Y \left(\sum_{k=2}^{m} \frac{(-1)^{k+1}}{k} \left(\sum_{j=1}^{k} \binom{k}{j} \varepsilon^{m(j-1) + k - j - 1} \left(\sum_{i=1}^{m-1} \varepsilon^{i-1} n_+^{(i)} \right)^{k-j} \left( n_{+}^{(m)}\right)^{j}\right)\right) \left.\rule{0pt}{28pt}\right],
    \end{align*}
    where \(0 \leq a_i \in\mathbb{Z}\) for all \(1\leq i \leq m-1\) and
    \begin{equation*}
    S_{m-1} = \sum_{i=1}^{m-1} a_i, \quad P_{m-1} = \sum_{i=1}^{m-1} (i-1) a_i.
    \end{equation*}
\end{proof}

\begin{lemma}\label{lem::res_vel_elec}
    We can rewrite the electron velocity error-scaled residual terms as 
    \begin{align*}
        & \text{Res}_{v_-}^{(m)} = -\varepsilon^{5/2}\left[\rule{0pt}{28pt}\right. m_e \left(\partial_T v_-^{(m)} + \tfrac{1}{2}\partial_Y\left(\sum_{k=1}^{m} \sum_{j=k}^{m} \varepsilon^{k-1} v_{-}^{(j)} v_{-}^{(m+k-j)} \right)\right) \\[8pt] 
        & + \partial_Y\left( \sum_{k=2}^{m}\frac{(-1)^{k+1}  \varepsilon^{k-m-1}}{k}\left(\sum_{\substack{S_{m-1} = k \\ P_{m-1} \geq m+1-k}} \binom{k}{a_1,a_2,\ldots,a_{m-1}} \prod_{i=1}^{m-1} (\varepsilon^{i-1}n_-^{(i)})^{a_i}\right) \right) \\[8pt]
        & + \partial_Y \left(\sum_{k=2}^{m} \frac{(-1)^{k+1}}{k} \left(\sum_{j=1}^{k} \binom{k}{j} \varepsilon^{m(j-1) + k - j - 1} \left(\sum_{i=1}^{m-1} \varepsilon^{i-1} n_-^{(i)} \right)^{k-j} \left( n_{-}^{(m)}\right)^{j}\right)\right) \left.\rule{0pt}{28pt}\right].
    \end{align*}
where \(0 \leq a_i \in\mathbb{Z}\) for all \(1\leq i \leq m-1\) and 
\begin{equation*}
S_{m-1} = \sum_{i=1}^{m-1} a_i, \quad P_{m-1} = \sum_{i=1}^{m-1} (i-1) a_i.
\end{equation*}
\end{lemma}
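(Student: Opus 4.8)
The plan is to prove this exactly as the ion velocity residual in \cref{lem::res_vel_ion}, exploiting the fact that the electron momentum residual $\text{Res}_{v_-}^{(m)}$ is algebraically isomorphic to $\text{Res}_{v_+}^{(m)}$. Comparing \eqref{eq::vm_remainder} with its ion counterpart, the only structural differences are: every ``$+$'' subscript becomes ``$-$''; the ion temperature $\tau_i$ multiplying the logarithmic (pressure) term is replaced by the unit electron temperature; the mass factor $m_e$ rides along with the kinetic block $\partial_t V_-^{(m)} + \tfrac12\partial_x(V_-^{(m)})^2$; and the potential enters with the opposite sign. The one genuinely different ingredient is how I eliminate $\phi$: instead of the ion momentum relation I use the electron momentum equation \eqref{eq::coeff_ek_d} (taken at order $k=m-1$), which supplies $\partial_Y\phi^{(m)} = \partial_Y\ell_-^{(m)} + m_e(\partial_T v_-^{(m-1)} - c\partial_Y v_-^{(m)} + \tfrac12\partial_Y\sum_{j=1}^{m-1} v_-^{(j)}v_-^{(m-j)})$. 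This is the natural choice precisely because the target is written purely in electron variables with the $m_e$ weighting attached only to the inertial terms.

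For the base case $m=1$ I would compute $\text{Res}_{v_-}^{(1)}$ from \eqref{eq::vm_remainder}, apply the Gardner--Morikawa transformation, and expand. Substituting the first-order relations \eqref{eq::rely_on_n+1} together with the sound speed \eqref{eq::C2}, the leading (order $\varepsilon^{3/2}$) contribution inside the bracket, $-m_e c\partial_Y v_-^{(1)} + \partial_Y n_-^{(1)} - \partial_Y\phi^{(1)}$, cancels identically, leaving $\text{Res}_{v_-}^{(1)} = -\varepsilon^{5/2}m_e(\partial_T v_-^{(1)} + v_-^{(1)}\partial_Y v_-^{(1)})$, which is the claimed formula with the two multinomial sums empty.

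For the inductive step I would mirror the ion argument verbatim. Writing $V_-^{(m)} = V_-^{(m-1)} + \varepsilon^m v_-^{(m)}$ and $N_-^{(m)}-1 = (N_-^{(m-1)}-1) + \varepsilon^m n_-^{(m)}$, I would peel off the recursion $\text{Res}_{v_-}^{(m)} = \varepsilon^{-1}\text{Res}_{v_-}^{(m-1)} - \varepsilon^{3/2}[\zeta_1 + \zeta_2 + \zeta_3]$, where $\zeta_1$ collects the new kinetic and potential terms and $\zeta_2,\zeta_3$ collect the cross terms from expanding $(N_-^{(m)}-1)^k$. In $\zeta_1$ I substitute the electron $\phi$-relation above: the resulting $+m_e c\partial_Y v_-^{(m)}$ cancels the kinetic $-m_e c\partial_Y v_-^{(m)}$, and the resulting $-\partial_Y\ell_-^{(m)}$ cancels the top-order pressure term. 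I then insert the induction hypothesis and, as in the ion case (the $\delta_1 = -(\delta_2+\delta_3)$ cancellation), show that every term of order larger than $\varepsilon^{5/2}$ vanishes, before reindexing the survivors (the analogs of $\alpha,\beta_1,\dots,\beta_4,\gamma$) into the velocity sum and the two multinomial sums of the statement via forward and backward applications of the Multinomial Theorem.

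The main obstacle is the same bookkeeping that dominated \cref{lem::res_vel_ion}: correctly tracking the powers of $\varepsilon$ through the constraints $S_{m-1}=k$ and $P_{m-1}\ge m+1-k$ appearing in \eqref{eq::ell_def}, isolating the boundary multi-indices (such as $a_1=m$ and $a_m=1$), and verifying that the surplus high-order terms cancel identically. The one point needing care beyond a literal ``replace $+$ by $-$'' is ensuring the factor $m_e$ attaches to exactly the kinetic block and not to the pressure block; this is guaranteed because the $\phi$ I subtract is supplied by \eqref{eq::coeff_ek_d}, in which $m_e$ already multiplies only the inertial terms while the logarithmic term stands alone. Consequently no new combinatorial identities are required, and the computation reduces to the ion case under the stated dictionary $+\leftrightarrow-$, $\tau_i\to 1$, with $m_e$ inserted on the kinetic terms.
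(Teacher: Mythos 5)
Your proposal is correct and matches the paper exactly: the paper's own proof of this lemma simply states that the argument is nearly identical to that of Lemma~\ref{lem::res_vel_ion} (with the roles of $m_e$ and $\tau_i$ adjusted) and omits the details. Your dictionary $+\leftrightarrow-$, $\tau_i\to 1$, with $m_e$ attached to the inertial block and $\phi^{(m)}$ eliminated via \eqref{eq::coeff_ek_d}, is precisely the intended mirroring, and you supply more justification than the paper does.
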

\begin{proof}
    The proof of this lemma is nearly identical to the process from Lemma~\eqref{lem::res_vel_ion} and is therefore omitted. The only marked difference comes from the key coefficients \(m_e\) and \(\tau_i\). 
\end{proof}

\begin{lemma}\label{lem::res_phi}
    We can rewrite the electric potential error-scaled residual terms as 
    \begin{align*}
    \text{Res}_{\phi}^{(m)} = -\varepsilon^2\partial_Y^2 \phi^{(m)}.
    \end{align*}
\end{lemma}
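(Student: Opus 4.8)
The plan is to prove this directly, without induction, by exploiting the fact that the Poisson equation in \eqref{eq::5hotrescale} is \emph{linear} in the unknowns, so no multinomial machinery is needed. Recall that
\[
\text{Res}_{\phi}^{(m)} = -\varepsilon^{1-m}\Big[\partial_x^2 \Phi^{(m)} - N_{-}^{(m)} + N_{+}^{(m)}\Big],
\]
where $N_{\pm}^{(m)} = 1 + \sum_{k=1}^m \varepsilon^k n_\pm^{(k)}$ and $\Phi^{(m)} = \sum_{k=1}^m \varepsilon^k \phi^{(k)}$, each evaluated at $(Y,T) = (\varepsilon^{1/2}(x-ct),\,\varepsilon^{3/2}t)$. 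First I would apply the chain rule under the Gardner--Morikawa change of variables: since $\partial_x Y = \varepsilon^{1/2}$ and $T$ does not depend on $x$, every $x$-derivative of a function of $Y$ contributes a factor $\varepsilon^{1/2}$, so $\partial_x^2 \Phi^{(m)} = \varepsilon\,\partial_Y^2 \Phi^{(m)} = \varepsilon\sum_{k=1}^m \varepsilon^k \partial_Y^2 \phi^{(k)}$.

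Next I would expand the density difference. The constant background cancels, leaving $N_+^{(m)} - N_-^{(m)} = \sum_{k=1}^m \varepsilon^k\bigl(n_+^{(k)} - n_-^{(k)}\bigr)$, and I would evaluate each summand using the order-by-order Poisson relations already recorded during coefficient matching. The first-order line of \eqref{eq::coeff_e1} gives $n_+^{(1)} - n_-^{(1)} = 0$, while the Poisson line \eqref{eq::coeff_ek_e} of the general system gives $n_+^{(k+1)} - n_-^{(k+1)} = -\partial_Y^2 \phi^{(k)}$ for every $k \geq 1$. Substituting these, the $k=1$ term drops out and, after reindexing $j = k-1$,
\[
N_+^{(m)} - N_-^{(m)} = -\sum_{k=2}^m \varepsilon^k \partial_Y^2 \phi^{(k-1)} = -\varepsilon\sum_{j=1}^{m-1} \varepsilon^j \partial_Y^2 \phi^{(j)}.
\]

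Combining the two pieces, the bracket telescopes:
\[
\partial_x^2 \Phi^{(m)} - N_-^{(m)} + N_+^{(m)} = \varepsilon\sum_{k=1}^m \varepsilon^k \partial_Y^2 \phi^{(k)} - \varepsilon\sum_{j=1}^{m-1}\varepsilon^j \partial_Y^2 \phi^{(j)} = \varepsilon^{m+1}\partial_Y^2 \phi^{(m)},
\]
and multiplying by $-\varepsilon^{1-m}$ yields the claimed identity $\text{Res}_\phi^{(m)} = -\varepsilon^2 \partial_Y^2 \phi^{(m)}$. Unlike the density and velocity residuals of Lemmas~\ref{lem::res_dens}--\ref{lem::res_vel_elec}, this calculation requires neither an induction nor any multinomial expansion, precisely because the fifth equation of the EP system is linear; the only point demanding care is the bookkeeping of the telescoping sum, and in particular treating the first order separately so that one does not generate a spurious $\partial_Y^2\phi^{(0)}$ contribution from the vanishing density difference. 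I therefore do not anticipate any genuine obstacle here.
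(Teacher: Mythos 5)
Your argument is correct: the chain-rule factor $\partial_x^2 = \varepsilon\,\partial_Y^2$, the cancellation of the $k=1$ density difference via \eqref{eq::coeff_e1}, and the substitution $n_+^{(k+1)} - n_-^{(k+1)} = -\partial_Y^2\phi^{(k)}$ from \eqref{eq::coeff_ek_e} do make the bracket telescope to $\varepsilon^{m+1}\partial_Y^2\phi^{(m)}$, and multiplying by $-\varepsilon^{1-m}$ gives the stated identity. The paper instead proceeds by induction on $m$: it verifies the base case $m=1$ using $n_-^{(1)} = n_+^{(1)}$, then at step $m$ splits off the top-order terms, identifies the lower-order block as $\varepsilon^{-1}\text{Res}_{\phi}^{(m-1)}$, and cancels the new contribution using the single relation $n_-^{(m)} = n_+^{(m)} + \partial_Y^2\phi^{(m-1)}$ from \eqref{eq::5_rely_on_n+k} and \eqref{eq::5_rely_on_n+k_spec}. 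The two arguments rest on exactly the same order-by-order Poisson constraints; yours simply sums them all at once rather than peeling off one order per inductive step. Your version is arguably cleaner and makes transparent \emph{why} no multinomial bookkeeping is needed (linearity of the Poisson equation), while the paper's inductive format has the virtue of running in parallel with the genuinely inductive proofs of Lemmas~\ref{lem::res_dens}--\ref{lem::res_vel_elec}. Either way, the result and the essential cancellation mechanism are identical, and your note about treating the first order separately to avoid a spurious $\partial_Y^2\phi^{(0)}$ term is exactly the right point of care.
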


\begin{proof}
For our base case, with the application of the Gardner-Morikawa transformation, \begin{align*}\text{Res}_{\phi_{\phantom{+}}}^{(1)} = - \Big[\partial_x^2 \Phi^{(1)} - N_{-}^{(1)} + N_{+}^{(1)}\Big] = - \Big[\varepsilon \partial_Y^2 \Phi^{(1)} - N_{-}^{(1)} + N_{+}^{(1)}\Big].\end{align*}
Expanding the terms and utilizing the fact that \(n_{-}^{(1)} = n_{+}^{(1)}\) from \eqref{eq::rely_on_n+1},
\begin{align*}
    \text{Res}_{\phi_{\phantom{+}}}^{(1)} & = -\Big[\varepsilon^2 \partial_Y^2 \phi^{(1)} - \varepsilon n_{-}^{(1)} + \varepsilon n_{+}^{(1)}\Big] \\
    & = -\varepsilon^2 \Big[\partial_Y^2 \phi^{(1)}\Big].
\end{align*}
    Thus we make our induction hypothesis, \begin{align*}\text{Res}_{\phi}^{(m-1)} = -\varepsilon^2\partial_Y^2 \phi^{(m-1)}.\end{align*}
    For the \(m^{\text{th}}\) step of the electric potential residual, we follow a similar process as before to get
    \begin{align*}
        \text{Res}_{\phi_{\phantom{+}}}^{(m)} & = -\varepsilon^{1-m}\Big[\partial_x^2 \Phi^{(m)} - N_{-}^{(m)} + N_{+}^{(m)}\Big] \\
        & = -\varepsilon^{1-m}\Big[\varepsilon \partial_Y^2 \Phi^{(m)} - N_{-}^{(m)} + N_{+}^{(m)}\Big] \\
        & = -\varepsilon^{1-m}\Big[\varepsilon \partial_Y^2 \Phi^{(m-1)} - N_{-}^{(m-1)} + N_{+}^{(m-1)}\Big] - \varepsilon \Big[\varepsilon\partial_Y^2 \phi^{(m)} - n_{-}^{(m)} + n_{+}^{(m)}\Big] \\
        & = \varepsilon^{-1}\text{Res}_{\phi}^{(m-1)} - \varepsilon \Big[\varepsilon\partial_Y^2 \phi^{(m)} - (n_{+}^{(m)} + \partial_{Y}^{2}\phi^{(m-1)}) + n_{+}^{(m)}\Big] \\
        & = -\varepsilon \partial_Y^2 \phi^{(m-1)} - \varepsilon \Big[\varepsilon\partial_Y^2 \phi^{(m)} - \partial_{Y}^{2}\phi^{(m-1)}\Big] \\
        & = - \varepsilon^2 \partial_Y^2 \phi^{(m)}.
    \end{align*}
\end{proof}


With these revised residuals, we can discover how well our initial approximation solves the EP System \eqref{eq::5hotrescale}. 
Specifics are encapsulated in Theorem~\ref{thm::res_size}.

\begin{center}{\rule{4cm}{0.4pt}}\end{center}

\begin{theorem}\label{thm::res_size}
Fix \(m\geq 1\). Then
\begin{align*}
    \sup_{0 \leq t}\; \lVert \text{Res}_{n_\pm}^{(m)}(\cdot, t) \rVert_{L^2(x)} & \leq C \varepsilon^{9/4} \\
    \sup_{0 \leq t}\; \lVert \text{Res}_{v_\pm}^{(m)}(\cdot, t) \rVert_{L^2(x)} & \leq C \varepsilon^{9/4} \\
    %
    %
    \sup_{0 \leq t}\; \lVert \text{Res}_{\phi}^{(m)}(\cdot, t) \rVert_{L^2(x)} & \leq C \varepsilon^{7/4}
\end{align*}
where \(C\) is a scalar dependent on 
\[
\sup_{0 \leq T}\; \left\lVert \left(N_{\pm}^{(m)}, V_{\pm}^{(m)}, \Phi^{(m)}\right)(\cdot,T)  \right\rVert_{H^{s_m}(Y)}.
\]
\end{theorem}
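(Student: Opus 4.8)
The plan is to feed the rewritten residuals of Lemmas~\ref{lem::res_dens}--\ref{lem::res_phi} into a single change-of-variables estimate. Those lemmas have already stripped off the threatening prefactor $\varepsilon^{1-m}$ and re-expressed each residual as a fixed power of $\varepsilon$ (namely $\varepsilon^{5/2}$ for $\text{Res}_{n_\pm}^{(m)}$ and $\text{Res}_{v_\pm}^{(m)}$, and $\varepsilon^2$ for $\text{Res}_\phi^{(m)}$) multiplying a bracketed expression assembled from the profiles $n_\pm^{(k)}, v_\pm^{(k)}, \phi^{(k)}$ and their $Y$-derivatives. All that remains is to pass from the $L^2$ norm in $x$, taken at fixed $t$, to an $L^2$ norm in the slow variable $Y$, and then to bound the bracket by the profile norms.

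First I would record the scaling identity underlying everything. If $g$ depends on $x$ only through $Y=\varepsilon^{1/2}(x-ct)$, then the substitution $dY=\varepsilon^{1/2}\,dx$ gives
\begin{equation*}
\lVert g\rVert_{L^2(x)} = \varepsilon^{-1/4}\,\lVert g\rVert_{L^2(Y)}.
\end{equation*}
Combining this with Lemma~\ref{lem::res_dens} produces $\lVert\text{Res}_{n_\pm}^{(m)}\rVert_{L^2(x)}=\varepsilon^{5/2}\varepsilon^{-1/4}\lVert(\cdots)\rVert_{L^2(Y)}=\varepsilon^{9/4}\lVert(\cdots)\rVert_{L^2(Y)}$, and identically for the velocity residuals via Lemmas~\ref{lem::res_vel_ion} and \ref{lem::res_vel_elec}; Lemma~\ref{lem::res_phi} gives $\lVert\text{Res}_\phi^{(m)}\rVert_{L^2(x)}=\varepsilon^{7/4}\lVert\partial_Y^2\phi^{(m)}\rVert_{L^2(Y)}$. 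Since $T=\varepsilon^{3/2}t$ maps $[0,\infty)$ bijectively onto itself, the supremum over $t\geq 0$ coincides with the supremum over $T\geq 0$, so the proof reduces to bounding the $L^2(Y)$ norms of the bracketed expressions uniformly in $T$ and in $\varepsilon\in(0,1)$.

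The crux is to check that each bracket stays bounded as $\varepsilon\to 0$, i.e. that every monomial it contains carries a \emph{nonnegative} net power of $\varepsilon$. For the density residual this is plain, every term sitting at order $\varepsilon^{j-1}$ with $j\geq 1$. For the velocity residuals it is exactly the job of the summation constraints built into Lemmas~\ref{lem::res_vel_ion} and \ref{lem::res_vel_elec}: the product $\prod_i(\varepsilon^{i-1}n_\pm^{(i)})^{a_i}$ contributes $\varepsilon^{P_{m-1}}$ with $P_{m-1}=\sum_i(i-1)a_i$, so together with the explicit factor $\varepsilon^{k-m-1}$ the net exponent is $P_{m-1}+(k-m-1)\geq(m+1-k)+(k-m-1)=0$ precisely because $P_{m-1}\geq m+1-k$; for the remaining multinomial terms the exponent $m(j-1)+k-j-1\geq k-2\geq 0$ for $m\geq 1$, $k\geq 2$, $1\leq j\leq k$. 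With every power of $\varepsilon$ trapped at nonnegative order, each bracket is a finite sum of products of profile functions and their $Y$-derivatives, which I would estimate using the Banach-algebra property of $H^{s_m}(\mathbb{R})$ for $s_m$ large; the time derivatives $\partial_T n_\pm^{(m)}$, $\partial_T v_\pm^{(m)}$ are first traded for spatial derivatives through the governing equations \eqref{eq::5_KdV_n+k} and \eqref{eq::5_rely_on_n+k}. All resulting constants are absorbed into $C$, which therefore depends only on $\sup_T\lVert(N_\pm^{(m)},V_\pm^{(m)},\Phi^{(m)})\rVert_{H^{s_m}(Y)}$.

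The main obstacle is this final bookkeeping: verifying nonnegativity of the $\varepsilon$-exponents uniformly across the multinomial sums in the velocity residuals and confirming that the attendant $H^{s_m}$ product and derivative estimates close with the available regularity $s_m\leq s_1-3(m-1)$. Once those are in place the argument is mechanical; the genuinely delicate work was already done in Lemmas~\ref{lem::res_dens}--\ref{lem::res_phi}, where the cancellations that remove the $\varepsilon^{1-m}$ singularity occur.
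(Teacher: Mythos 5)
Your proposal is correct and follows essentially the same route as the paper's proof: invoke Lemmas~\ref{lem::res_dens}--\ref{lem::res_phi} to replace the $\varepsilon^{1-m}$ prefactor by the fixed powers $\varepsilon^{5/2}$ (resp.\ $\varepsilon^{2}$), apply the change of variables $dY=\varepsilon^{1/2}dx$ to pick up the extra $\varepsilon^{-1/4}$, and bound the remaining $L^2(Y)$ norms by the $H^{s_m}$ profile norms supplied by Theorems~\ref{thm::5_solve_n1KdV} and \ref{thm::5_solve_nkKdV}. Your additional bookkeeping (nonnegativity of the net $\varepsilon$-exponents in the multinomial sums and trading $\partial_T$ for $\partial_Y$ via the KdV equations) only makes explicit what the paper leaves implicit.
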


\begin{proof}
    By Theorem~\ref{thm::5_solve_n1KdV} and \ref{thm::5_solve_nkKdV} for \(2\leq k \leq m\), \(\big(N_{\pm}^{(m)}, V_{\pm}^{(m)}, \Phi^{(m)} \big)~\in L^\infty\left(H^{s_m}(\mathbb{R}) ; -T, T\right) \) for any time interval. The rest of the proof follows from Lemmas \eqref{lem::res_dens}, \eqref{lem::res_vel_ion}, \eqref{lem::res_vel_elec}, and \eqref{lem::res_phi}. For instance,
    \begin{equation*}
        \lVert \text{Res}_{\phi}^{(m)} \rVert_{L^2(x)}^2 = \int \left\lvert \varepsilon^2 \partial_Y^2 \phi^{(m)}\left(\varepsilon^{1/2}(x-ct), \; \varepsilon^{3/2}t)\right) \right\rvert^2 \; dx = \varepsilon^{7/2} \lVert \partial_Y^2 \phi^{(m)}(Y,T) \rVert_{L^2(Y)} \leq C\varepsilon^{7/2}.
    \end{equation*}
\end{proof}

\begin{center}{\rule{4cm}{0.4pt}}\end{center}

%

\section{Conclusion and Further Notes}\label{kdv_1}

We have bounded the residual terms by a small order of \(\varepsilon\). Therefore, we know the initial expansion \(\big(N_{\pm}^{(m)}, V_{\pm}^{(m)}, \Phi^{(m)}\big)\) accurately captures the behavior of \eqref{eq::5hotrescale}. Consequently, the Euler-Poisson system for a two-fluid plasma can be approximated by solutions of the Korteweg-de Vries equations in the long-wavelength limit.

Guo and Pu have established the Korteweg-de Vries limit of the single-fluid system \cite{guo}. However, they went further and obtained a global uniform estimate for the remainder terms \((n_+^{(R_\varepsilon)}, v_+^{(R_\varepsilon)}, \phi^{(R_\varepsilon)})\) in \(H^2\) bounded by the system's initial condition. They utilized pseudo-differential operator techniques and established a new norm to do so. Eventually, we will be able to do something similar. For now, we state the following analogous conjecture.
\begin{conjecture}
Let \((n_\pm^{(k)},v_\pm^{(k))},\phi^{(k)})(\varepsilon^{1/2}(x-ct),\varepsilon^{3/2}t)\) be solutions to \eqref{eq::5_rely_on_n+k} and \eqref{eq::5_KdV_n+k} with initial values \((n_\pm^{(k)},v_\pm^{(k))},\phi^{(k)})(\varepsilon^{1/2}x,0)\in H^{s_k}\) for \(1 \leq k \leq m\) established in Theorems~\ref{thm::5_solve_n1KdV} and \ref{thm::5_solve_nkKdV}. Let \((n_\pm^{(R_\varepsilon)},v_\pm^{(R_\varepsilon)},\phi^{(R_\varepsilon)})(x,0) \in H^{s}\), \(s\geq 3\), satisfy \eqref{eq::5_remainder_sys} and assume
\begin{align*}
    n_\pm(x,0) & = N_{\pm}^{(m)}\left(\varepsilon^{1/2}x,0\right) + \varepsilon^{m-1} n_{\pm}^{(R_\varepsilon)}(x,0), \\
    v_\pm(x,0) & = V_{\pm}^{(m)}\left(\varepsilon^{1/2}x ,0\right) + \varepsilon^{m-1} v_{\pm}^{(R_\varepsilon)}(x,0), \\
    \phi(x,0) & = \Phi^{(m)}\left(\varepsilon^{1/2}x,0\right) + \varepsilon^{m-1}\phi^{(R_\varepsilon)}(x,0).
\end{align*}
Then for any \(t>0\), there exists \(\varepsilon^*>0\) such that if \(0 < \varepsilon < \varepsilon^*\), the solution of the two-fluid EP system \eqref{eq::5hotrescale} with initial data \((n_\pm, v_\pm, \phi)(x,0)\) can be expressed as
\begin{align*}
    n_\pm(x,t) & = N_{\pm}^{(m)}\left(\varepsilon^{1/2}(x-ct),\varepsilon^{3/2}t\right) + \varepsilon^{m-1} n_{\pm}^{(R_\varepsilon)}(x,t), \\
    v_\pm(x,t) & = V_{\pm}^{(m)}\left(\varepsilon^{1/2}(x-ct),\varepsilon^{3/2}t\right) + \varepsilon^{m-1} v_{\pm}^{(R_\varepsilon)}(x,t), \\
    \phi(x,t) & = \Phi^{(m)}\left(\varepsilon^{1/2}(x-ct),\varepsilon^{3/2}t\right) + \varepsilon^{m-1} \phi^{(R_\varepsilon)}(x,t).
\end{align*}
Moreover, for all \(0 < \varepsilon < \varepsilon^*\),
    \begin{align*}
    \sup_{0 \leq t}\; \left\lVert (n_\pm^{(R_\varepsilon)},v_\pm^{(R_\varepsilon)},\phi^{(R_\varepsilon)})(\cdot, t) \right\rVert_{H^{\delta_1}}^{2} & \leq C_t\left(1 + \lVert (n_\pm^{(R_\varepsilon)},v_\pm^{(R_\varepsilon)},\phi^{(R_\varepsilon)})(\cdot,0) \rVert_{H^{\delta_2}}^2\right)
    \end{align*}
where \(\lVert \cdot \rVert_{H^{\delta_1}}\) and \(\lVert \cdot \rVert_{H^{\delta_2}}\) are to-be-defined norms that combine terms in \(H^{2}\).
\end{conjecture}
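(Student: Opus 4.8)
The plan is to treat the conjecture as a long-time, uniform-in-$\varepsilon$ energy estimate for the remainder system \eqref{eq::5_remainder_sys}, adapting the strategy of Guo and Pu \cite{guo} so as to carry the electron-mass weight $m_e$ through every step. The two ingredients such an argument needs are already supplied by the paper: Theorems~\ref{thm::5_solve_n1KdV} and~\ref{thm::5_solve_nkKdV} place the background approximation $(N_\pm^{(m)}, V_\pm^{(m)}, \Phi^{(m)})$ in $L^\infty(H^{s_m}; -T, T)$ on any interval, and Theorem~\ref{thm::res_size} shows the forcing obtained by inserting the ansatz into \eqref{eq::5hotrescale} is small, namely $O(\varepsilon^{9/4})$ for the density and velocity residuals and $O(\varepsilon^{7/4})$ for the Poisson residual in $L^2_x$. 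The whole task is thus to show that this small forcing, together with the $\varepsilon^{m-1}$ prefactor multiplying the remainder in the ansatz, keeps $(n_\pm^{(R_\varepsilon)}, v_\pm^{(R_\varepsilon)}, \phi^{(R_\varepsilon)})$ bounded in an adapted $H^2$-type norm on $[0,t]$.

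First I would recast \eqref{eq::5_remainder_sys} as a quasilinear symmetric-hyperbolic system for $(n_\pm^{(R_\varepsilon)}, v_\pm^{(R_\varepsilon)})$ coupled to the elliptic Poisson equation for $\phi^{(R_\varepsilon)}$. Because $N_\pm^{(m)} = 1 + O(\varepsilon)$ stays bounded away from zero for $\varepsilon$ small, the logarithmic pressure terms in \eqref{eq::vp_remainder} and \eqref{eq::vm_remainder} may be expanded about $N_\pm^{(m)}$ to produce smooth, bounded coefficients. The delicate coupling is the electric field $\partial_x \phi^{(R_\varepsilon)}$, which enters the ion and electron momentum equations with opposite signs; paired with the Poisson relation $\partial_x^2 \phi^{(R_\varepsilon)} = n_-^{(R_\varepsilon)} - n_+^{(R_\varepsilon)} + \text{Res}_\phi^{(m)}$ it yields, after integration by parts, the electrostatic energy $\tfrac12\|\partial_x \phi^{(R_\varepsilon)}\|_{L^2}^2$. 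Following Guo and Pu I would then build a symmetrized energy $\mathcal E_s(t)$ collecting $\sum_\pm \sum_{|\alpha|\le s} \int [ a_\pm (\partial_x^\alpha n_\pm^{(R_\varepsilon)})^2 + b_\pm (\partial_x^\alpha v_\pm^{(R_\varepsilon)})^2 ]\,dx$ plus the electrostatic contribution, with weights chosen so that $b_- \propto m_e$ and the top-order transport terms cancel upon integration by parts.

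With the energy in place, the estimate proceeds by applying $\partial_x^\alpha$ to each equation, pairing against the matching weighted component, and summing. The surviving terms are of three kinds: commutators $[\partial_x^\alpha, V_\pm^{(m)}]$ and $[\partial_x^\alpha, N_\pm^{(m)}]$, controlled by Kato--Ponce and Moser estimates in terms of $\|(N_\pm^{(m)}, V_\pm^{(m)})\|_{H^{s_m}}$ times $\mathcal E_s$; the self-interactions $\varepsilon^{m-1}\partial_x(n_\pm^{(R_\varepsilon)} v_\pm^{(R_\varepsilon)})$ and $\tfrac{\varepsilon^{m-1}}{2}\partial_x(v_\pm^{(R_\varepsilon)})^2$, which are higher order in the remainder and carry the small prefactor; and the residual forcing, bounded via Theorem~\ref{thm::res_size} by $C\varepsilon^{7/4}\mathcal E_s^{1/2}$. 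This gives a differential inequality of the form $\dot{\mathcal E}_s \le C_1 \mathcal E_s + C_2\varepsilon^{m-1}\mathcal E_s^{3/2} + C_3\varepsilon^{7/4}\mathcal E_s^{1/2}$. I would then close the estimate by the continuation argument already used in the boundedness part of Theorem~\ref{thm::5_solve_nkKdV}: fix $t$, suppose $\mathcal E_s$ stays below a threshold, integrate with the factor $e^{-C_1\tau}$ to obtain $\mathcal E_s(t)\le (\mathcal E_s(0)+C)e^{C_1 t}$, and choose $\varepsilon^*$ small enough that the cubic term remains subdominant, yielding the contradiction that closes the bootstrap and the stated bound with $C_t = e^{C_1 t}$ and the $1+\|\cdot\|^2$ on the right.

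The principal obstacle --- and the reason this is stated only as a conjecture --- is designing the norms $H^{\delta_1}, H^{\delta_2}$ and the symmetrizer so that the estimate closes \emph{uniformly in} $\varepsilon$. The difficulty is the scale mismatch: the remainder lives on the unscaled variables $(x,t)$ while the background lives on the slow scale $(Y,T)=(\varepsilon^{1/2}(x-ct), \varepsilon^{3/2}t)$, so each $\partial_x$ landing on $N_\pm^{(m)}$ or $V_\pm^{(m)}$ gains a factor $\varepsilon^{1/2}$ whereas one landing on the remainder does not; keeping these two bookkeepings compatible without losing a derivative or a power of $\varepsilon$ is what forced Guo and Pu to replace plain Sobolev norms with pseudodifferential ones. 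In the two-fluid setting with $m_e\neq 0$ there are two further wrinkles: the symmetrizer weights and the sound speed $c^2=(1+\tau_i)/(1+m_e)$ now depend on both parameters, so one must check the quadratic form stays uniformly positive definite at the physical $m_e$; and in the cold-ion limit $\tau_i=0$ the restoring term $\partial_x\ln n_+$ degenerates, so the coercivity controlling $n_+^{(R_\varepsilon)}$ must instead be supplied through the electrostatic coupling. Securing this uniform coercivity is the step I expect to be hardest, and it is precisely where the pseudodifferential machinery referenced in the conjecture enters.
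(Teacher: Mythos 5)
The statement you are proving is stated in the paper as a \emph{conjecture}: the authors explicitly defer its proof (``Eventually, we will be able to do something similar. For now, we state the following analogous conjecture.''), so there is no proof in the paper to compare against. Your proposal is a reasonable and well-informed outline of the Guo--Pu program --- symmetrize the remainder system, build a weighted energy including the electrostatic term, and close a Gronwall-type inequality using Theorem~\ref{thm::res_size} for the forcing --- but as written it is a research plan, not a proof. You say so yourself: the construction of the norms $H^{\delta_1}$, $H^{\delta_2}$ and the uniform-in-$\varepsilon$ coercivity of the symmetrizer are exactly the steps you identify as ``hardest'' and do not carry out, and those are the entire content of the conjecture. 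A proof attempt that defers the decisive step cannot be accepted as closing the statement.

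Beyond that structural point, two concrete gaps in the sketch would need attention even if the norms were supplied. First, the timescale: the KdV dynamics lives at $T=\varepsilon^{3/2}t$, so the estimate is only meaningful if it holds for $t$ up to order $\varepsilon^{-3/2}$; a raw Gronwall factor $e^{C_1 t}$ with $C_1=O(1)$ is vacuous there. One must show that the linear growth rate produced by the commutators with $(N_\pm^{(m)},V_\pm^{(m)})$ is itself $O(\varepsilon^{3/2})$ (each $\partial_x$ on the background gains $\varepsilon^{1/2}$ and the background is $O(\varepsilon)$), and this bookkeeping is precisely what the scale mismatch you mention threatens to destroy. Second, Theorem~\ref{thm::res_size} controls the residuals only in $L^2(x)$, whereas your energy $\mathcal{E}_s$ is an $H^2$-type quantity; you would need to upgrade the residual bounds to the same norm (straightforward from Lemmas~\ref{lem::res_dens}--\ref{lem::res_phi}, since $Y$-derivatives cost only positive powers of $\varepsilon$, but it must be stated). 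You also implicitly assume local-in-time existence of the remainder solution being estimated, which for the quasilinear system \eqref{eq::5_remainder_sys} with the logarithmic pressure and the nonlocal Poisson coupling deserves its own argument.
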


\newpage
\bibliographystyle{siam}

\newpage
\appendix{
\section*{Appendix}
\addcontentsline{toc}{section}{Appendix}
\begin{appendix}\label{appendix}

\appendix


\section{\large{Ell Derivation}}\label{app::ell_deriv}

We are looking for coefficients of the \(\varepsilon^{k+1}\) term of our formal expansion in Section~\ref{sec::eps_k+1_coeff}. Located in the velocity equations is \(\ln(n_\pm)\), which, with the expansion, is 
\begin{equation*}
    \ln (1+\varepsilon n_\pm^{(1)} + \varepsilon^2 n_\pm^{(2)} + \cdots).
\end{equation*}
In order to obtain the \(\varepsilon^{k+1}\) coefficient, we need to use Taylor's Theorem to approximate the function. So,
\begin{equation*}
    \ln\left(1 + \varepsilon n_\pm^{(1)} + \varepsilon^2 n_\pm^{(2)} + \cdots + \varepsilon^{k+1} n_\pm^{(k+1)}\right) = \sum_{j=1}^{k+1}\frac{(-1)^{j+1}}{j}\left( \sum_{i=1}^{k+1} \varepsilon^i n_\pm^{(i)}\right)^j.
\end{equation*}
We then expand further using Multinomial Theorem, to get 
\begin{align*}
    & = \sum_{j=1}^{k+1}\frac{(-1)^{j+1}}{j}\left(\sum_{\substack{ a_1 + a_2 + \cdots + a_{k+1} = j \\ a_1,a_2,\ldots,a_{k+1} \geq 0}} \binom{j}{a_1,a_2,\ldots,a_{k+1}} \prod_{i=1}^{k+1} \left(\varepsilon^i n_\pm^{(i)}\right)^{a_i}\right) \\[5pt]
    & = \sum_{j=1}^{k+1}\frac{(-1)^{j+1}}{j}\left(\sum_{\substack{ a_1 + a_2 + \cdots + a_{k+1} = j \\ a_1,a_2,\ldots,a_{k+1} \geq 0}} \binom{j}{a_1,a_2,\ldots,a_{k+1}} \prod_{i=1}^{k+1} \varepsilon^{ia_i}\left( n_\pm^{(i)}\right)^{a_i}\right).
\end{align*}
For coefficient of \(\varepsilon^{k+1}\), we need 
\begin{equation*}
    k+1 = \sum_{i=1}^{k+1} i a_i = a_1 + 2a_2 + \cdots + (k+1)a_{k+1}.
\end{equation*} 
Hence, we define
\begin{equation*}
    \ell^{(k+1)}_{\pm} = \sum_{j=1}^{k+1}\frac{(-1)^{j+1}}{j}\left(\sum_{\substack{a_1 + 2a_2 + \cdots + (k+1) a_{k+1} = k+1 \\ a_1 + a_2 + \cdots + a_{k+1} = j \\ a_1,a_2,\ldots,a_{k+1} \geq 0}} \binom{j}{a_1,a_2,\ldots,a_{k+1}} \prod_{i=1}^{k+1} \left(n_\pm^{(i)}\right)^{a_i}\right).
\end{equation*}


\section{\large{KdV solution}}\label{app::kdv_sol}
We are looking for a solution to \eqref{eq::KdV_n+1},
\begin{equation*}
    \partial_{T}n_{+}^{(1)} + c n_{+}^{(1)}\partial_{Y}n_{+}^{(1)} + \frac{(c^2 m_e -1)^2}{2c(1+m_e)} \partial_{Y}^{3}n_{+}^{(1)} = 0.
\end{equation*}
Starting, we make a traveling wave ansatz of speed \(\tilde\mu\): \(n_+^{(1)}(Y,T) = n_+^{(1)}(Y - \tilde\mu T)\coloneqq n_+^{(1)}(\xi)\). Inputting this ansatz into the KdV equation and then integrating with respect to \(\xi\) yields
\begin{equation*}
    -\tilde\mu n_{+}^{(1)} + \frac{c}{2} (n_{+}^{(1)})^2 + \frac{(c^2 m_e -1)^2}{2c(1+m_e)} (n_{+}^{(1)})'' = 0.
\end{equation*}
The constant of integration will be zero as traveling waves are asymptotically null. Multiply each term by \((n_+^{(1)})'\) and integrate again to get
\begin{equation*}
    \frac{-\tilde\mu}{2}(n_+^{(1)})^2 + \frac{c}{6}(n_+^{(1)})^3 + \frac{(c^2 m_e -1)^2}{4c(1+m_e)} ((n_+^{(1)})')^2 = 0.
\end{equation*}
Rewriting, we find
\begin{equation*}
    \frac{d(n_+^{(1)})}{d\xi} = \pm \frac{2 n_+^{(1)}\sqrt{c(1+m_e)(\frac{\tilde\mu}{2} - \frac{c}{6}n_+^{(1)}) }}{c^2 m_e - 1},
\end{equation*}
or
\begin{equation*}
    \pm \frac{(c^2 m_e - 1)d(n_+^{(1)})}{2 n_+^{(1)}\sqrt{c(1+m_e)(\frac{\tilde\mu}{2} - \frac{c}{6}n_+^{(1)}) }} = d\xi.
\end{equation*}
We will integrate both sides and have
\begin{align*}
    & \mp \frac{\sqrt{2}\left(c^2 m_e-1\right) \tanh ^{-1}\left(\frac{\sqrt{3 \tilde\mu -c n_+^{(1)}}}{\sqrt{3\tilde\mu}}\right)}{\sqrt{c \tilde\mu \left(m_e+1\right)}} \;=\; \xi \\[5pt]
    \implies \qquad & \tanh ^{-1}\left(\frac{\sqrt{3 \tilde\mu -c n_+^{(1)}}}{\sqrt{3\tilde\mu}}\right) \;=\; \mp \xi \sqrt{\frac{c \tilde\mu \left(m_e+1\right)}{2\left(c^2 m_e-1\right)^2}}\\[5pt]
    \implies\qquad & n_+^{(1)} \;=\; \frac{3\tilde\mu}{c}\left(1 - \tanh^2\left(\mp \xi \sqrt{\frac{c \tilde\mu \left(m_e+1\right)}{2\left(c^2 m_e-1\right)^2}}\right)\right).
\end{align*}
We will now use the fact that \(1-\tanh^2\theta = \text{sech}^2\theta\). Hence,
\begin{equation*}
    n_+^{(1)} \;=\; \frac{3\tilde\mu}{c} \text{sech}^2\left(\mp \xi \sqrt{\frac{c \tilde\mu \left(m_e+1\right)}{2\left(c^2 m_e-1\right)^2}}\right) .
\end{equation*}
Moreover, because hyperbolic secant is an even function,
\begin{equation*}
    n_+^{(1)} \;=\; \frac{3\tilde\mu}{c} \text{sech}^2\left(\xi \sqrt{\frac{c \tilde\mu \left(m_e+1\right)}{2\left(c^2 m_e-1\right)^2}}\right) .
\end{equation*}
Replacing \(\xi\) with its definition,
\begin{equation*}
    n_+^{(1)}(Y,T) \;=\; \frac{3\tilde\mu}{c} \text{sech}^2\left((Y - \tilde\mu T) \sqrt{\frac{c \tilde\mu \left(m_e+1\right)}{2\left(c^2 m_e-1\right)^2}}\right) .
\end{equation*}

\end{appendix}
}

\end{document}